\documentclass[12pt, draftclsnofoot, onecolumn]{IEEEtran}

\usepackage{ifpdf}
\usepackage{cite}
\usepackage[pdftex]{graphicx}

\usepackage[T1]{fontenc}
\usepackage{enumitem}
\usepackage{afterpage}
\usepackage{bm}
\usepackage{listings}
\usepackage{xcolor}
\usepackage{graphicx}
\usepackage{float}
\usepackage[font=small]{caption}
\usepackage[font=footnotesize]{subcaption}

\usepackage{placeins}
\usepackage{url}
\usepackage{epstopdf}
\usepackage{booktabs}
\usepackage{footnote}
\makesavenoteenv{table}
\makesavenoteenv{tabular}
\usepackage{flushend}
\usepackage{amsmath}
\usepackage{amssymb}
\usepackage{tabularx}
\usepackage{booktabs}
\usepackage{placeins}
\usepackage{graphicx}
\usepackage{color}
\usepackage{colortbl}
\usepackage{bbold}

\usepackage[acronyms,nonumberlist,nopostdot,nomain,nogroupskip]{glossaries}

\usepackage{tikz}
    \usetikzlibrary{shapes.arrows}

\usepackage{pgfplots}
\usepackage{adjustbox}

\usepackage{gincltex}
\usepgfplotslibrary{fillbetween}
\usetikzlibrary{plotmarks}
\usetikzlibrary{patterns}
\usepackage{mathtools}
\usepackage{booktabs}
\usepackage{threeparttable}
\usepackage{amsfonts}
\DeclareMathSymbol{\shortminus}{\mathbin}{AMSa}{"39}
\usepackage{amsthm}

\newlength{\hatchspread}
\newlength{\hatchthickness}
\newlength{\hatchshift}
\newcommand{\hatchcolor}{}
\tikzset{hatchspread/.code={\setlength{\hatchspread}{#1}},
         hatchthickness/.code={\setlength{\hatchthickness}{#1}},
         hatchshift/.code={\setlength{\hatchshift}{#1}},
         hatchcolor/.code={\renewcommand{\hatchcolor}{#1}}}
\tikzset{hatchspread=3pt,
         hatchthickness=0.8pt,
         hatchshift=0pt,
         hatchcolor={blue!20}}
\pgfdeclarepatternformonly[\hatchspread,\hatchthickness,\hatchshift,\hatchcolor]
   {custom north west lines}
   {\pgfqpoint{\dimexpr-2\hatchthickness}{\dimexpr-2\hatchthickness}}
   {\pgfqpoint{\dimexpr\hatchspread+2\hatchthickness}{\dimexpr\hatchspread+2\hatchthickness}}
   {\pgfqpoint{\dimexpr\hatchspread}{\dimexpr\hatchspread}}
   {
    \pgfsetlinewidth{\hatchthickness}
    \pgfpathmoveto{\pgfqpoint{0pt}{\dimexpr\hatchspread+\hatchshift}}
    \pgfpathlineto{\pgfqpoint{\dimexpr\hatchspread+0.15pt+\hatchshift}{-0.15pt}}
    \ifdim \hatchshift > 0pt
      \pgfpathmoveto{\pgfqpoint{0pt}{\hatchshift}}
      \pgfpathlineto{\pgfqpoint{\dimexpr0.15pt+\hatchshift}{-0.15pt}}
    \fi
    \pgfsetstrokecolor{\hatchcolor}
    \pgfusepath{stroke}
   }

\newtheorem{theorem}{Theorem}

\newtheorem{corollary}{Corollary}[theorem]
\newtheorem{definition}{Definition}

\def \fwidth {0.6\columnwidth}
\def \fheight {0.3\columnwidth}
\def \subfwidth {0.8\linewidth}
\def \subfheight {0.6\linewidth}

\definecolor{violet}{rgb}{0.6,0,0.6}%
\definecolor{orange_D}{rgb}{1,0.3,0}%
\definecolor{cyan}{rgb}{0,0.67,0.64}%
\definecolor{red}{rgb}{0.9,0,0}%
\definecolor{green}{rgb}{0,0.8,0}%
\definecolor{yellow}{rgb}{1,0.8,0}


\hyphenation{op-tical net-works semi-conduc-tor}

\usepackage{glossaries}
\newacronym{aoi}{AoI}{Age of Information}
\newacronym{paoi}{PAoI}{Peak AoI}
\newacronym{pdf}{PDF}{Probability Density Function}
\newacronym{cdf}{CDF}{Cumulative Density Function}
\newacronym{fcfs}{FCFS}{First Come First Serve}
\newacronym{lcfs}{LCFS}{Last Come First Serve}
\newacronym{iot}{IoT}{Internet of Things}
\newacronym{mdp}{MDP}{Markov Decision Process}
\newacronym{csma}{CSMA}{Carrier Sense Multiple Access}
\newacronym{urllc}{URLLC}{Ultra Reliable Low Latency Communication}
\newacronym{3cls}{3CLS}{Communications, Computing,  Control,  Localization,  and  Sensing}
\newacronym{mps}{MPS}{Multi-Purpose 3CLS and energy Services}
\newacronym{mpr}{MPR}{Multi-Packet Reception}
\newacronym{unb}{UNB}{Ultra-Narrowband}
\newacronym{bs}{BS}{Base Station}

\newcommand{\edit}[1]{\textcolor{black}{#1}}

\pgfplotsset{compat=1.15}

\captionsetup{belowskip=0pt}

\begin{document}

\title{Peak Age of Information Distribution for Edge Computing with Wireless Links}

\author{Federico Chiariotti, Olga Vikhrova, Beatriz Soret, Petar Popovski\thanks{F. Chiariotti (corresponding author, email: fchi@es.aau.dk), B. Soret, and P. Popovski are with the Department of Electronic Systems, Aalborg University, Denmark. O. Vikhrova is with the DIIES Department, University Mediterranea of Reggio Calabria, Italy.}
}

\maketitle

\begin{abstract}
\gls{aoi} is a critical metric for several \gls{iot} applications, where sensors keep track of the environment by sending updates that need to be as fresh as possible. The development of edge computing solutions has moved the monitoring process closer to the sensor, reducing the communication delays, but the processing time of the edge node needs to be taken into account. 
\edit{Furthermore, a reliable system design in terms of freshness requires the knowledge of the full distribution of the \gls{paoi}, from which the probability of occurrence of rare, but extremely damaging events can be obtained.}
In this work, we model the communication and computation delay of such a system 
 as two \gls{fcfs} queues in tandem, analytically deriving the full distribution of the \gls{paoi} for the $M/M/1$ -- $M/D/1$ and the $M/M/1$ -- $M/M/1$ tandems, which can represent a wide variety of realistic scenarios.
\end{abstract}

\glsresetall

\begin{IEEEkeywords}
	Age of Information, Peak Age of Information, edge computing, queuing networks
\end{IEEEkeywords}

\section{Introduction}\label{sec:intro}
Traditional communication networks consider packet delay as the one and only performance metric to capture the latency requirements of a transmission. However, numerous \gls{iot} applications require the transmission of real-time status updates of a process from a generating point to a remote destination~\cite{abd2019role}. Sensor networks, vehicular networks and other tracking systems, and industrial control are examples of this kind of update process. For these cases, the \gls{aoi} is a novel concept that better represents timeliness requirements by quantifying the freshness of the information at the receiver \cite{kaul2012real}. Basically, \gls{aoi} computes the time elapsed since the latest received update was generated at any given moment in time, i.e., how old is the last packet received by the destination. Another age-related metric is the \gls{paoi}, which is the maximum value of \gls{aoi} for each update, i.e., how old the last packet was when the next one is received by the destination. As in other performance metrics of communication systems, the \gls{paoi} is more informative than the average age when the interest is in worst-case analysis, e.g., when the system requirement is on the tail of the distribution. To illustrate, the \gls{paoi} can be useful to limit the latency of networked control systems, ensuring that the receiver has a recent picture of the state of the transmitter.

Edge computing is a technology that is gaining traction in age-sensitive \gls{iot} applications, as the transmission of sensor readings to a centralized cloud requires too much time and increases uncertainty, while processing the received data closer to the sensor that generated them can reduce the communication latency and the overall age of the information available to the monitoring or control process~\cite{yu2017survey}. 
\edit{Freshness of information is going to be one of the critical parameters in beyond 5G and 6G, enabling \gls{3cls} services~\cite{saad2019vision}: these applications require joint sensing, computation, and communication resources. In particular, 
services in telehealth, agriculture, manufacturing plants and robotics require strict control performance guarantees, which can only be possible by carefully designing the communication system. Furthermore, \gls{aoi} and \gls{paoi} are often the relevant timing metrics for control systems, as they represent the time difference between the system observed by the controller and the real one when the control action takes place. As these services require high reliability, analyzing the average age is not enough: the tail of its distribution is also a very important parameter, as it directly affects the risk of control system failures, which must be constrained to very low levels for critical applications.}
In these edge applications that combine communication and computation, the contribution of both to the \gls{aoi} needs to be taken into account. Limiting the age of the processed information is a critical requirement, which can influence the choice between local and edge-based computation~\cite{kuang2019age} for \gls{iot} nodes. The problem becomes even more complex when considering multiple sources and different packet generation behaviors, along with limited communication capabilities~\cite{li2019general}.

\begin{figure}[!t]
 \centering
  \includegraphics[width=0.45\columnwidth]{./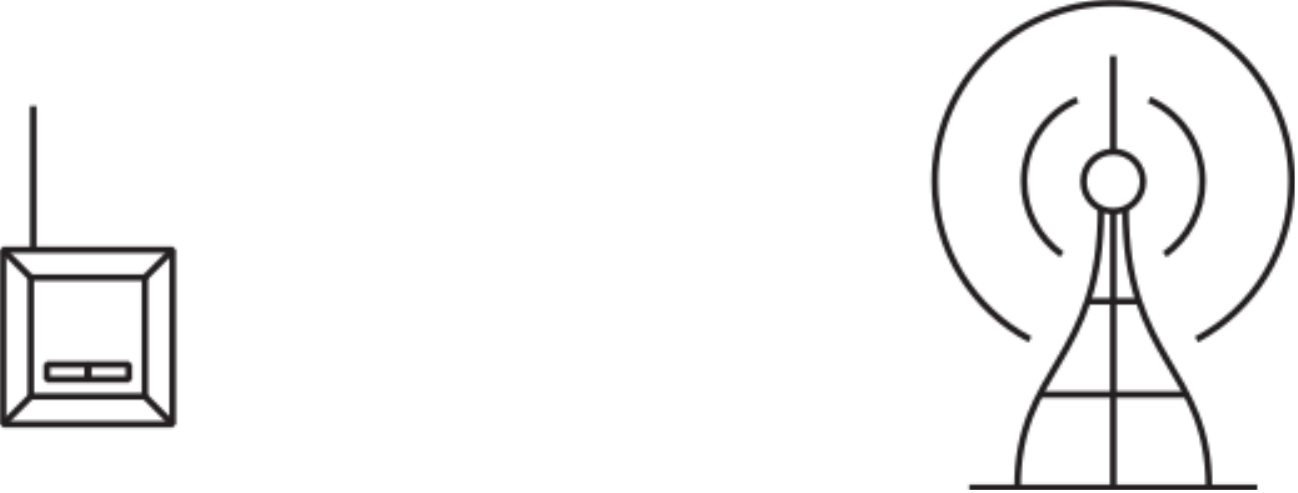}
  \vspace{0.3cm}
  
 \includegraphics{./tandem.tex}
 \vspace{-0.1cm}
 \caption{An example of the \gls{iot} edge computing use case: the sensor transmits data to a computing-enabled edge node, which needs to maintain information freshness.}
 \vspace{-0.45cm}
 \label{fig:edge}
\end{figure}

Tandem queues (specifically the minimal 2-nodes case) are then a natural modeling choice for this kind of scenarios, as the first system represents the communication link, while the second one represents the computing-enabled edge node and its task queue. \edit{Fig.~\ref{fig:edge} shows an example: the communication buffer at the sensor and the task queue on the edge computing-enabled \gls{bs} are the two queues in the tandem.}

If the load on the computing-enabled edge node is time constant, while communication is less predictable due to, e.g., dynamic channel variations and random access, then the $M/M/1$ -- $M/D/1$ tandem queue is an appropriate model. \edit{An $M/M/1$ queue is a proper model for the channel if we assume an ALOHA system~\cite{Norman1970} with perfect \gls{mpr}, in which~\cite{Goseling2015} the packets are not lost due to collisions and can only be lost due to channel errors. This model is suitable for IoT systems based on \gls{unb} transmissions, such as SigFox \cite{mroue2018mac}. On the other hand, computation time is often modeled as a linear function of the data size in the literature~\cite{song2019age}, and updates of the same size would have a constant and deterministic service time, therefore well-represented by an $M/D/1$ queue.} Another option is to consider the computing load as also variable over time, leading to stochastic computing time, and represent the two systems by $M/M/1$ queues with different service rates~\cite{green1980queueing}.

An example of the \gls{aoi} dynamics is plotted in Fig.~\ref{fig:aoi_triangles}: the \gls{aoi} grows linearly over time, then decreases instantly when a new update arrives. Naturally, the age at the destination is never lower than the age at the intermediate node, as each packet that reaches the destination has already passed through the intermediate node, but the dynamics between the two are not trivial and serve as a motivation for this work. We analyze the distribution of the \gls{paoi} in a tandem queue with two systems with independent service times and a single source, where each infinite queue follows the \gls{fcfs} policy. We consider both the $M/M/1$ -- $M/D/1$ tandem and the $M/M/1$ -- $M/M/1$ case, covering common communication relaying and communication and computation scenarios. Our aim is to derive the complete \gls{pdf} of the \gls{paoi} for systems with arbitrary packet generation and service rates, allowing system designers to define reliability requirements using \gls{paoi} thresholds and derive the network specifications needed to meet those requirements.

\begin{figure}[!t]
\centering
 \includegraphics{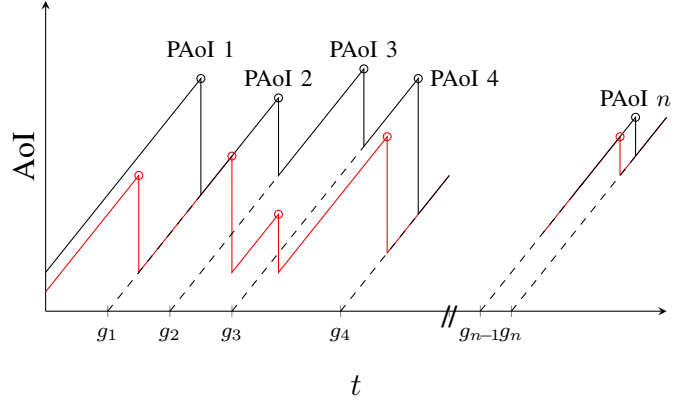}
 \vspace{-0.3cm}
 \caption{The time-evolution of the Age of Information in a 2-node tandem queue. The age at the destination is plotted in black, the age at the intermediate node is plotted in red. Departures from
 each node are marked by a circle.}
   \vspace{-0.45cm}
 \label{fig:aoi_triangles}
\end{figure}

Although the motivation for our analysis is the IoT edge computing, we notice there are other age-sensitive applications that can use the same models and results of this paper. For instance, a tandem queue can model a relay networking in which a packet is transmitted through one or more buffer-aided intermediate nodes between transmitter and receiver to, e.g., overcome the physical distance between the two end-points. A good example is a satellite relay that connects the ground and a satellite (or vice versa) through another satellite. In this case, the links are highly unpredictable and depend on different factors such as positioning jitter~\cite{toyoshima2002optimum} and conditions in the upper atmosphere. A tandem queue in which the servers represent the transmission over successive links can represent this kind of system, provided that the service (transmission) times of different links are independent. Blockchain is another application where \gls{aoi} is a critical metric to validate transactions in real time, particularly when combining the use of the distributed ledger with \gls{iot} applications~\cite{Lee2020blockchain}. As nodes need to transmit information, which is then validated, the tandem model is a useful abstraction for the communication and computation time~\cite{rovira2019optimizing}.

\edit{
The contribution of this paper can be summarized by the following points:
\begin{itemize}
 \item The full \gls{pdf} of the \gls{paoi} is derived for the tandem of an $M/M/1$ and an $M/D/1$ queue, which is relevant for \gls{iot} edge computing scenarios;
 \item The same derivation is performed for the tandem of two $M/M/1$ queues, which can be relevant in relay-based communications with two independent links;
 \item Design considerations are drawn for the two systems, using the analytical formulas as a basis for system optimization.
\end{itemize}
}

The structure of the paper is as follows.  In Sec.~\ref{sec:system_model} the system model is detailed, as well as the procedure to calculate the \gls{aoi}. Sec.~\ref{sec:mm1} presents the calculations based on the model for the $M/M/1$ -- $M/M/1$ tandem, while Sec.~\ref{sec:md1} does the same for the $M/M/1$ -- $M/D/1$ tandem. Numerical results are plotted in Sec.~\ref{sec:results}, and the paper is concluded in Sec.~\ref{sec:conclusions}.\footnote{The code used in the simulations, with the implementation of all the theoretical derivations in the paper, is available at https://github.com/AAU-CNT/tandem\_aoi.}

\section{Related work}

An overview of edge computing in IoT can be found in \cite{yu2017survey} and \cite{Hassan2018edge}. The initial research approach to latency in the edge paradigm was, in the context of 5G systems, to understand its potential to support \gls{urllc} \cite{Elbamby2018edge,Elbamby2019wireless,Hu2018delay}. The relevance of the \gls{aoi} for edge computing applications was first identified in \cite{kuang2019age}, although only the average \gls{aoi} is computed. The same authors propose in \cite{gong2020} a joint transmission and computing scheduling for a deadline. Another research area is the use of machine learning, particularly deep learning techniques, to unleash the full potential of \gls{iot} edge computing and enable a wider range of application scenarios~\cite{Wang2020convergence,li2018learning}. 

\gls{aoi} is a relatively new metric in networking, but it has gained widespread recognition thanks to its relevance to several applications. Most theoretical results refer to simple queuing systems with a single node and \gls{fcfs} policy. However, a few recent works focus their attention in the study of the age in tandem queues, even with different policies such as \gls{lcfs}~\cite{bedewy2017age}.  Some \gls{iot} scenarios have also been modeled as tandem $M/M/1$, $M/D/1$, or $D/M/1$ queues with multiple sources, as the read from the sensor is first pre-processed, and then transmitted to the server~\cite{xu2019optimizing}. In this case, each queue follows the \gls{fcfs} discipline, but the authors derive only the average \gls{paoi}. Another possibility is queue replacement, in which only the freshest update for each source is kept in the queue, reducing queue size significantly: in this case, the replaced packet is not placed in the queue, but dropped altogether, reducing channel usage with respect to simple \gls{lcfs}, with or without preemption. In this case, the queue is modeled as an $M/M/1/2$, and if a new packet arrives it takes the queued packet's place. Some preliminary results on such a system are given in~\cite{pappas2015age}, while the average \gls{aoi} and \gls{paoi} are computed in~\cite{kosta2019queue} for one source and in~\cite{kosta2019age} for multiple sources. An analysis of the effect of preemption on this kind of models on the average \gls{aoi} is presented in~\cite{yates2018age}, and \cite{kam2018age} derives the average \gls{aoi} for two queues in tandem with preemption and different arrival processes. \edit{Another work~\cite{akar2020finding} derives the full distribution of the \gls{paoi} for preemptive queues whose service time follows a phase-type distribution, which can be used to represent multiple hops.} The \gls{pdf} of the \gls{aoi} in multi-hop networks with packet preemption has been derived in~\cite{Ayan2020probability}. Another work~\cite{buyukates2019age} deals with multicast networks in which a single source updates multiple receivers over several hops, deriving the average age in that context. Another recent work considers a scenario in which packets that are not received within a deadline are dropped, deriving the average \gls{paoi}~\cite{li2020age}. \edit{Some of these works deal with multi-hop queuing networks, of which our tandem model is a specific case, but they all assume some form of preemption. The computation of \gls{aoi} and \gls{paoi} in systems with preemption is much simpler, particularly in systems of $M/M/1$ queues, but it cannot represent all the relevant use cases: preemption might not be possible or desirable, depending on the specific requirements of the control or monitoring application. For example, telehealth applications might benefit from receiving even out of date samples. To the best of our knowledge, this work is the first to derive the complete \gls{pdf} of the age for non-preemptive tandem networks, covering these applications. Another example is the satellite relay scenario, where the traffic entering the tandem system comes from an aggregation of devices and therefore each individual update is relevant}. 

Other works concentrate on more realistic models, considering the effect of physical and medium access issues on the \gls{aoi}. A model considering a fading wireless channel with retransmissions was used to compute the \gls{paoi} distribution over a single-hop link in~\cite{devassy2019reliable}, and a recent live \gls{aoi} measurement study on a public networks generally confirmed that the theoretical models are realistic~\cite{beytur2019measuring}. Other works compute the average \gls{aoi} in \gls{csma}~\cite{maatouk2020age}, ALOHA~\cite{yates2020age} and slotted ALOHA~\cite{chen2020age} networks, considering the impact of the different medium access policies on the age. \edit{It is also possible to jointly optimize the sampling and updating processes, i.e., both the reading instants from the sensor and the transmission of updates, if both are controllable~\cite{li2020proc}: the cost of both operations is a determinant of the overall \gls{aoi} of the system~\cite{zhou2019joint}.}

A generalization of our model, relevant for applications like satellite relaying, is the multi-hop network with and arbitrary number of $M/M/1$ systems, senders, and receivers: in this case, the moment generating function of the \gls{aoi} was derived in~\cite{yates2020agenet} for preemptive servers. Other scheduling policies make things more complicated: tight bounds for the average \gls{aoi} were derived in~\cite{chiariotti2020information} for the \gls{fcfs} discipline and other \gls{aoi}-oriented queue prioritization mechanisms.

Deriving the complete distribution of the age might be critical for reliability-oriented applications, but it is still mostly unexplored in the literature: while the work on deriving the first moments of the \gls{aoi} distribution is extensive, the analytical complexity of deriving the complete \gls{pdf} is a daunting obstacle. A recent work~\cite{champati2019statistical} uses the Chernoff bound to derive an upper bound of the quantile function of the \gls{aoi} for two queues in tandem with deterministic arrivals, but to the best of our knowledge, the complete \gls{paoi} distribution has only been derived for simple queuing systems~\cite{devassy2019reliable}. \edit{Another interesting approach~to achieve reliability is to consider the \gls{aoi} process directly: in~\cite{liu2019taming}, the authors use extreme value theory to derive the complementary \gls{cdf} of the \gls{paoi} in a realistic channel setting with scheduled transmissions. \cite{chaccour2020} proposes the use of the risk of ruin, an economic concept,  as a metric of fresh and reliable information in augmented reality. Specifically, the \gls{cdf} of the \gls{paoi} is used to find the probability of maximum severity of ruin \gls{paoi} in single node systems. Yet another approach to reliability, which does not derive the complete distribution of the \gls{aoi} but uses an average measure on its tail, is presented in~\cite{zhou2020risk}: the authors pose the risk minimization problem as a \gls{mdp}, optimizing the age with bounded risk.} For a more complete overview of the literature on \gls{aoi}, we refer the reader to~\cite{yates2020agesurvey}.

\section{System model} \label{sec:system_model}

We consider a tandem system composed by two consecutive queues, where the first one is $M/M/1$. Packets are generated at the first system by a Poisson process with rate $\lambda$ and enter the first queue, whose service time is exponentially distributed with rate $\mu_1$. When a packet exits the first system, it enters the second one, whose service time is a constant $D$ or an exponential random variable with rate $\mu_2$. \edit{A simple diagram of the tandem is shown in Fig.~\ref{fig:edge}.} Both queues are of infinite size and are oblivious to the content of the packets: there is no preemption of the updates, i.e. an older packet is not removed from the queue when a new update comes from the same source. As explained in the introduction, we assume that the service times in the two systems are independent. The assumption is realistic for edge computing systems, as the communication and processing are usually independent, but not always verified in relay networks; it therefore needs a careful examination. Even if the service times are independent, however, the waiting times are not, as the queue at the second system depends on the output of the first one. In the following, we use the compact notation $p_{X|Y}(x|y)$ for the conditioned probability $p[X=x|Y=y]$. \glspl{pdf} are denoted by a lower-case $p$, and \glspl{cdf} by an upper-case $P$.

In a tandem queue, the packet generation times correspond to the arrival times at the first queue, whereas the receiving instants are the departure times in the second queue. We define the total system time for packet $i$ as $T_i$: when a packet is received, the \gls{aoi} is equal to 
 $T_i$, i.e., the difference between the time $r_i$ when it is received by the destination and the time $g_i$ when it was generated. The \gls{paoi} (see Fig.~\ref{fig:aoi_triangles}) is the maximum value of the \gls{aoi}, i.e., the age at the instant immediately before the arrival of a new update. If we denote the interarrival time $Y_i=g_i-g_{i-1}$, then \gls{paoi} is given by
 \begin{equation}
     \Delta_i=r_i-g_{i-1}=r_i-g_i+g_i-g_{i-1}=T_i+Y_i
 \end{equation}
 
If packet $i$ arrives right after packet $i-1$, it will probably have to wait in the queue for it to depart the system: the system time $T_i$ depends on the interarrival time $Y_i$. The \gls{pdf} of the \gls{paoi} with value $\tau_i$, denoted as $p_{\Delta_i}(\tau_i)$, can then be computed by using the conditional system time probability $p_{T_i|Y_i}(t_i|y_i)$:
\begin{equation}
    p_{\Delta_i}(\tau_i)=\int_0^{\tau_i} p_{Y_i}(y_i)p_{T_i|Y_i}(\tau_i-y_i|y_i) dy_i
\end{equation}
We then need to compute 
$p_{T_i|Y_i}(t_i|y_i)$.
For each system $j=1,2$ in the tandem, the system time $T_{i,j}$ is defined as the sum of the waiting time $W_{i,j}$ and the service time $S_{i,j}$. We also define $Y_{i,j}$, the interarrival time at system $j$. For $j=1$, we have $Y_{i,1}=Y_i$, while for $j=2$ :
\begin{equation}
 Y_{i,2}=g_i+T_{i,1} -(g_{i-1}+T_{i-1,1})=Y_i+T_{i,1}-T_{i-1,1}.
\end{equation}
Since the first queue is $M/M/1$, the system times for the two queues are independent, as proven by Reich~\cite{reich1963note} using Burke's theorem~\cite{burke1956output} and considering each system in steady state for packet $i-1$. If we consider system $j$ in steady state, i.e., we do not condition on $Y_{i-1,j}$, the system time $T_{i-1,j}$ is exponentially distributed with rate $\alpha_j=\mu_j-\lambda$.
However, the values of $Y_i$ and $T_i$ are correlated, and the computation of the \gls{paoi} needs to account for this fact. In the following, we give the conditional \gls{pdf} of the components of the \gls{paoi}, which will then be joined in the derivation.
6
First, we define the \emph{extended waiting time} $\Omega_{i,j}$ as the difference between the previous packet's system time and the interarrival time at the system, i.e., $\Omega_{i,j}=T_{i-1,j}-Y_{i,j}$.
The reason we named $\Omega_{i,j}$ the extended waiting time is that $W_{i,j}=[\Omega_{i,j}]^+$, where $[x]^+$ is equal to $x$ if it is positive and $0$ if $x$ is negative. From the definition of $\Omega_{i,j}$, we have:
\begin{equation}
\begin{aligned}
Y_{i,2}&=Y_i+(W_{i,1}+S_{i,1})-T_{i-1,1}=S_{i,1}+W_{i,1}-\Omega_{i,1}=S_{i,1}+[-\Omega_{i,1}]^+.
\end{aligned}
\end{equation}
since $W_{i,1}-\Omega_{i,1}=[\Omega_{i,1}]^+-\Omega_{i,1}=[-\Omega_{i,1}]^+$. In the following paragraphs, we derive the \gls{pdf} of the extended waiting time for the two system types that we are analyzing. Fig.~\ref{fig:packet} shows a possible realization of a packet's path through the tandem queue, highlighting the meaning of the extended waiting time: in the first system, in which packet $i$ is queued, it corresponds to the waiting time, while in the second, in which the packet is not queued and enters service immediately, its negative value corresponds to the time between the departure of packet $i-1$ from the second system and the arrival of packet $i$ at the same system. When $W=0$, we have a negative extended waiting time, as packet $i$ arrives after packet $i-1$ leaves the system. In general, we have $\Omega_{i,2}=T_{i-1,2}-Y_{i,2}$, and the system time for packet $i-1$ is $T_{i-1,2}=W_{i-1,2}+D$, while we know the interarrival time $Y_{i,2}=S_{i,1}+[-\Omega_{i,1}]^+$.

\begin{figure}[!t]
 \centering
 \resizebox{0.9\columnwidth}{!}{
\tikzset{  
block/.style    = {draw, rectangle, minimum height = 2cm, minimum width = 1em}}
\begin{tikzpicture}[auto]
  \node[rectangle,draw,dashed,inner sep=0pt,minimum width=2cm,minimum height=7cm,name=q1] at (0,0) {};
  \node[rectangle,dashed,right of=q1,node distance=5cm,draw,inner sep=0pt,minimum width=2cm,minimum height=7cm,name=s1] {};
  \node[rectangle,dashed,right of=s1,node distance=5cm,draw,inner sep=0pt,minimum width=2cm,minimum height=7cm,name=q2] {};
  \node[rectangle,dashed,right of=q2,node distance=5cm,draw,inner sep=0pt,minimum width=2cm,minimum height=7cm,name=s2] {};
  
    \node [name=label1,anchor=east] at (-1.5,3) {$g_{i-1}$};
    \node [name=label2,anchor=east] at (-1.5,1) {$g_i$};
    \node [name=q1l,anchor=south] at (0,3.75) {\small{Transmission buffer (sensor)}};
    \node [name=s1l,anchor=south] at (5,3.75) {\small{Communication link (sensor-BS)}};
    \node [name=q2l,anchor=south] at (10,3.75) {\small{Edge node queue (BS)}};
    \node [name=q2l,anchor=south] at (15,3.75) {\small{Computing} (BS)};

    \draw[densely dotted] (label1.east) to (-1,3);
    \draw[dashed,|-|] (-3,3) to node[midway,right] {$Y_{i,1}$} (-3,1);
    \draw[densely dotted] (label2.east) to (-1,1);
    
    \draw[->] (-1,3) to (1,1.75);
    \draw[->] (-1,1) to (1,0);
    \draw[dashed,|-|] (1.25,3) to node[midway,right] {$\Omega_{i-1,1}$} (1.25,1.75);
    \draw[dashed,|-|] (1.25,1) to node[midway,right] {$\Omega_{i,1}$} (1.25,0);

    \draw[densely dotted] (1,1.75) to (4,1.75);
    \draw[densely dotted] (1,0) to (6,0);

    \draw[->] (4,1.75) to (6,0);
    \draw[->] (4,0) to (6,-2);
    
    \draw[dashed,|-|] (6.25,1.75) to node[midway,right] {$S_{i-1,1}$} (6.25,0);
    \draw[dashed,|-|] (6.25,0) to node[midway,right] {$S_{i,1}$} (6.25,-2);
    
    \draw[densely dotted] (6,0) to (9,0);
    \draw[densely dotted] (6,-2) to (9,-2);
    
    \draw[->] (9,0) to (11,-0.5);
    \draw[->] (9,-2) to (11,-2);

    \draw[dashed,|-|] (11.25,0) to node[midway,right] {$\Omega_{i-1,2}$} (11.25,-0.5);
    \draw[dashed,|-|] (11.25,-2) to node[midway,right] {$-\Omega_{i,2}$} (11.25,-1.5);
    
    \draw[densely dotted] (11,-0.5) to (14,-0.5);
    \draw[densely dotted] (11,-1.5) to (16,-1.5);
    \draw[densely dotted] (14,-2) to (11,-2);
    
    \draw[->] (14,-0.5) to (16,-1.5);
    \draw[->] (14,-2) to (16,-3);
    
    \draw[dashed,|-|] (17.5,-0.5) to node[midway,right] {$S_{i-1,1}$} (17.5,-1.5);
    \draw[dashed,|-|] (17.5,-3) to node[midway,right] {$S_{i,1}$} (17.5,-2);
    \node [name=label3,anchor=west] at (16.5,-3) {$r_i$};
    \node [name=label4,anchor=west] at (16.5,-1.5) {$r_{i-1}$};
    \draw[densely dotted] (label3.west) to (16,-3);
    \draw[densely dotted] (label4.west) to (16,-1.5);

    \draw[dashed,|-|] (16.25,3) to node[midway,right] {$\Delta_i$} (16.25,-3);
    \draw[->] (-4,3.5) to node[midway,left] {Time} (-4,-3.5);

  \end{tikzpicture}
}
\caption{Schematic of the four steps a packet goes through in a tandem queue, highlighting the components of the \gls{paoi}.\vspace{-0.3cm}}
\label{fig:packet}
\end{figure}
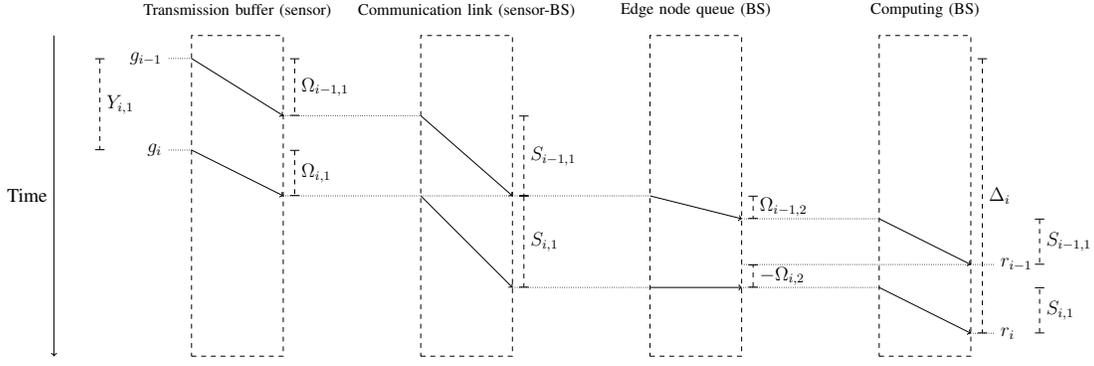

\begin{theorem}\label{th:md1waitcdf}\edit{
The \gls{cdf} of the waiting time for an $M/D/1$ queue with arrival rate $\lambda$ and service time $D$ is:
\begin{equation}
    P_W(w)=(1-\lambda D)\sum_{k=0}^{\left\lfloor\frac{w}{D}\right\rfloor} \frac{(-\lambda(w-kD))^k e^{\lambda(w-kD)}}{k!}.\label{eq:md1wait}
\end{equation}}
\end{theorem}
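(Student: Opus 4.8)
\emph{Proof plan.} The plan is to derive \eqref{eq:md1wait} by inverting the Pollaczek--Khinchine transform. By the Pollaczek--Khinchine formula for $M/D/1$ with load $\rho=\lambda D<1$, the stationary waiting time has Laplace--Stieltjes transform $\int_0^\infty e^{-sw}\,dP_W(w)=(1-\lambda D)\,s/\bigl(s-\lambda+\lambda e^{-sD}\bigr)$ (this specialises the general $M/G/1$ formula by substituting $e^{-sD}$ for the service-time transform). Using the integration-by-parts identity $\int_0^\infty e^{-sw}\,dP_W(w)=s\int_0^\infty e^{-sw}P_W(w)\,dw$, valid because $P_W$ is a right-continuous CDF on $[0,\infty)$ with $P_W(\infty)=1$, I would first rewrite this as the ordinary Laplace transform of the CDF,
\[
\widehat{P}_W(s)=\int_0^\infty e^{-sw}P_W(w)\,dw=\frac{1-\lambda D}{\,s-\lambda+\lambda e^{-sD}\,}.
\]

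Next I would factor the denominator as $(s-\lambda)\bigl(1+\lambda e^{-sD}/(s-\lambda)\bigr)$ and, for $\mathrm{Re}(s)$ large enough that $|\lambda e^{-sD}/(s-\lambda)|<1$, expand the resulting geometric series to get $\widehat{P}_W(s)=(1-\lambda D)\sum_{k\ge 0}(-\lambda)^k e^{-ksD}(s-\lambda)^{-(k+1)}$. Inverting each term with $\mathcal{L}^{-1}\{(s-\lambda)^{-(k+1)}\}(w)=w^k e^{\lambda w}/k!$ together with the time-shift rule yields the summand $(1-\lambda D)(-\lambda(w-kD))^k e^{\lambda(w-kD)}/k!$ supported on $\{w\ge kD\}$. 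Since for a fixed $w$ every term with $k>\lfloor w/D\rfloor$ vanishes, the series collapses to the finite sum in \eqref{eq:md1wait}.

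The step that needs care, and the main obstacle, is making this term-by-term inversion rigorous: the $N$-th partial sum of the series for $\widehat{P}_W$ is the Laplace transform of the $N$-th partial sum of the shifted Erlang-type functions above, and for each fixed $w$ the latter sequence is eventually constant, bounded in $[0,1]$, and of exponential order, so uniqueness of the Laplace transform identifies the limit with $P_W(w)$. One also relies on two standard facts as inputs --- the Pollaczek--Khinchine formula itself and, implicitly, the empty-system probability $P_W(0)=1-\lambda D$ --- which can simply be cited.

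A fully self-contained alternative avoids transforms. The Lindley recursion gives the stationary identity $W\stackrel{d}{=}[\,W+D-A\,]^{+}$ with $A\sim\mathrm{Exp}(\lambda)$ independent of $W$, so $P_W(w)=\Pr[A\ge W+D-w]$ for $w\ge 0$; conditioning on $W$ and using that $P_W$ is continuous on $(0,\infty)$, differentiation in $w$ reduces this to the delay differential equation $P_W'(w)=\lambda\bigl(P_W(w)-P_W(w-D)\bigr)$ on each interval $\bigl(nD,(n+1)D\bigr)$ (with the convention $P_W(w-D)=0$ for $w<D$), subject to $P_W(0)=1-\lambda D$ and continuity at every $nD$. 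I would then prove \eqref{eq:md1wait} by induction on $n$: on $\bigl[(n+1)D,(n+2)D\bigr)$ the equation is linear, so multiplying by $e^{-\lambda w}$ and integrating the inductive expression for $P_W(\cdot-D)$ produces the terms $k=1,\dots,n+1$ directly, while the constant of integration is pinned down by continuity at $(n+1)D$ and reproduces the $k=0$ term. The only delicate bookkeeping is verifying that the new $k=n+1$ summand vanishes at $w=(n+1)D$ --- which it does, since it carries the factor $(w-(n+1)D)^{n+1}$ --- so that continuity holds automatically and the induction closes.
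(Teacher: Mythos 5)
Your proposal is correct, but it is worth noting that the paper does not actually prove Theorem~\ref{th:md1waitcdf} at all: its ``proof'' is a one-line citation to Erlang's classical derivation of the $M/D/1$ waiting-time distribution. So any genuine derivation you give is, by construction, a different route from the paper's. Both of the routes you sketch are standard and sound. The transform argument (Pollaczek--Khinchine, geometric expansion of $\bigl(s-\lambda+\lambda e^{-sD}\bigr)^{-1}$, term-by-term inversion of shifted $(s-\lambda)^{-(k+1)}$ factors) is the modern textbook derivation and is the faster of the two; you correctly identify the only delicate point, namely justifying the term-by-term inversion, and your observation that for fixed $w$ the partial sums are eventually constant disposes of it. The Lindley/delay-ODE route is closer in spirit to Erlang's original argument and to Tak\'acs's integro-differential equation; your induction closes because the new $k=n+1$ summand carries the factor $(w-(n+1)D)^{n+1}$ and so vanishes at the junction, exactly as you say, and one can check directly that each summand $f_k(w)=(-\lambda(w-kD))^k e^{\lambda(w-kD)}/k!$ satisfies $f_k'=\lambda f_k-\lambda f_{k-1}(\cdot-D)$, which is what makes the delay equation hold term by term. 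The one input you treat as given in both routes is the boundary value $P_W(0)=1-\lambda D$; in the transform route it comes out of the PK formula automatically, while in the ``fully self-contained'' route it still needs an external justification (normalization $P_W(\infty)=1$, or the standard work-conservation argument for the idle probability $1-\rho$), so that route is not quite as self-contained as advertised --- but this is a standard fact and citing it is entirely reasonable, just as the paper cites the whole theorem.
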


\begin{proof}\edit{
See the derivation by Erlang \cite{erlang1920telefon}.}
\end{proof}

\begin{corollary}\label{th:md1waitpdf}\edit{
We can find the queuing time \gls{pdf} by deriving the \gls{cdf} from~\eqref{eq:md1wait}:
\begin{equation}
p_W(w)=\begin{cases}(1-\lambda D)\left(\lambda e^{\lambda w}+\sum_{k=1}^{\left\lfloor\frac{w}{D}\right\rfloor}\left[ \frac{(-\lambda)^k(w-kD)^{k-1} e^{\lambda(w-kD)}}{k!}(k+\lambda(w-kD))\right]\right)&w>0;\\
         1-\lambda D &w=0.\label{eq:md1waitpdf}
       \end{cases}
\end{equation}}
\end{corollary}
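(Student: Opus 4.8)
The plan is to obtain $p_W$ simply by differentiating the \gls{cdf} in~\eqref{eq:md1wait} with respect to $w$, the only delicate point being that the number of terms in the sum, $\lfloor w/D\rfloor$, is itself a function of $w$. First I would fix an integer $n\ge 0$ and restrict to the open interval $w\in(nD,(n+1)D)$, on which $\lfloor w/D\rfloor\equiv n$ is constant, so that $P_W$ is a finite sum of smooth functions and term-by-term differentiation is immediately justified. The $k=0$ summand contributes $(1-\lambda D)\lambda e^{\lambda w}$. For each $k\ge 1$, applying the product rule to $\frac{(-\lambda)^k(w-kD)^k e^{\lambda(w-kD)}}{k!}$ gives $\frac{(-\lambda)^k}{k!}\bigl[k(w-kD)^{k-1}+\lambda(w-kD)^k\bigr]e^{\lambda(w-kD)}$, which factors as $\frac{(-\lambda)^k(w-kD)^{k-1}e^{\lambda(w-kD)}}{k!}\bigl(k+\lambda(w-kD)\bigr)$; summing over $k=1,\dots,n$ reproduces exactly the $w>0$ branch of~\eqref{eq:md1waitpdf}.

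It then remains to rule out any singular (Dirac) contribution at the breakpoints $w=nD$, $n\ge 1$, where a new summand enters, and to treat $w=0$ separately. At $w=nD$ the newly appearing term (index $k=n$) equals $\frac{(-\lambda\cdot 0)^n e^{0}}{n!}=0$ for every $n\ge 1$, so $P_W$ is continuous there and its derivative carries no atom; the matching term in $p_W$, proportional to $(w-nD)^{n-1}$, merely lowers the smoothness of $p_W$ at $w=nD$ (a jump at $w=D$, a kink at $w=2D$, and so on), which is harmless for a density. At the origin, $P_W(0)=1-\lambda D$ while $P_W(w)=0$ for $w<0$, so $P_W$ has a jump of height $1-\lambda D$ at $w=0$; this is the probability that an arriving customer finds the server idle, and it is recorded as the discrete mass $p_W(0)=1-\lambda D$ (equivalently a $(1-\lambda D)\,\delta(w)$ term) in the second branch of~\eqref{eq:md1waitpdf}.

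As a consistency check I would verify that the pieces fit together: the $k=0$ term gives $P_W(0^+)=1-\lambda D$, consistent with the atom, and $(1-\lambda D)+\int_0^\infty p_W(w)\,dw=P_W(\infty)=1$, which holds because on each interval $(nD,(n+1)D)$ the antiderivative of the displayed $p_W$ is precisely the partial sum in~\eqref{eq:md1wait} and the boundary terms at the $w=nD$ telescope by the continuity just established. The main obstacle is purely the bookkeeping: keeping the $w$-dependent summation limit straight and making the continuity argument at the breakpoints airtight. Once that is in place, the result is a one-line differentiation of Theorem~\ref{th:md1waitcdf}.
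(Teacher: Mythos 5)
Your proposal is correct and matches the paper's (implicit) argument: the corollary is obtained exactly by piecewise differentiation of the CDF in~\eqref{eq:md1wait} on the intervals where $\lfloor w/D\rfloor$ is constant, with the atom of mass $1-\lambda D$ at $w=0$ accounting for the discontinuity noted in the text after the corollary. Your additional checks (vanishing of the newly entering summand at each breakpoint $w=nD$, and normalization) are sound and only make explicit what the paper leaves unstated.
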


The \gls{cdf} of the waiting time has a discontinuity, as the waiting time is exactly 0 with probability $1-\lambda D$, which corresponds to the probability of the packet finding an empty system and entering service immediately.

\begin{corollary}\label{th:md1total}\edit{
 As service time in an $M/D/1$ system is constant, we have the \gls{cdf} of the total time in the system:
\begin{equation}
  P_T(t)=P_W(t-D)u(t-D).
\end{equation}}
\end{corollary}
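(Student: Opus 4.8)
The plan is to exploit the decomposition $T=W+S$ introduced in Section~\ref{sec:system_model}, specialized to the $M/D/1$ node where the service time is the constant $S=D$. The whole point is that a deterministic quantity is trivially independent of (and in fact functionally unrelated to) the waiting time, so the total time in the system is nothing but the waiting time rigidly shifted by the fixed amount $D$; no convolution is needed, in contrast with the $M/M/1$ case treated later.

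Concretely, I would first write $P_T(t)=p[T\le t]=p[W+D\le t]=p[W\le t-D]$. For $t\ge D$ the right-hand side is by definition $P_W(t-D)$, the CDF given in Theorem~\ref{th:md1waitcdf}, so $P_T(t)=P_W(t-D)$ there. For $t<D$ the event $\{W\le t-D\}$ is empty because $W\ge 0$ almost surely, hence $P_T(t)=0$. The two cases are compactly combined by multiplying with the unit step, giving $P_T(t)=P_W(t-D)\,u(t-D)$, which is the claimed identity.

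There is essentially no obstacle here: the only thing worth remarking is that the translation sends the jump of $P_W$ at the origin — of size $1-\lambda D$, the probability of finding an empty system, as noted after Corollary~\ref{th:md1waitpdf} — to a jump of $P_T$ at $t=D$, which is consistent since a packet that waits zero time still spends exactly $D$ in service. If one wanted the density rather than the CDF, differentiating yields $p_T(t)=p_W(t-D)\,u(t-D)+(1-\lambda D)\,\delta(t-D)$, with the Dirac term accounting for that jump, but this is not needed for the statement as written.
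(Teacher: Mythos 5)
Your proposal is correct and matches the paper's (implicit) argument: the paper treats this as an immediate consequence of $T=W+D$ with $D$ deterministic, which is exactly the shift-of-the-CDF computation you carry out. Your added remark about the jump of size $1-\lambda D$ migrating from $w=0$ to $t=D$ is a correct and consistent observation, though not required.
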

\begin{corollary}\label{th:md1omega}
\edit{The \gls{pdf} of $\Omega_{i,2}$ conditioned on $S_{i,1}$ and $\Omega_{i,1}$ is given by:
\begin{equation}
\begin{aligned}
 p_{\Omega_{i,2}|S_{i,1},\Omega_{i,1}}(\omega_{i,2}|s_{i,1},\omega_{i,1})&=p(W_{i-1,2}+D+S_{i,1}+[-\Omega_{i,1}]^+=\omega_{i,2}|S_{i,1}=s_{i,1},\Omega_{i,1}=\omega_{i,1})\\
 &=p_W(\omega_{i,2}+s_{i,1}+[-\omega_{i,1}]^+-D).
\end{aligned}
\end{equation}}
\end{corollary}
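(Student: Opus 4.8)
The plan is to turn $\Omega_{i,2}$, after conditioning on $S_{i,1}$ and $\Omega_{i,1}$, into a deterministic translate of the second-node waiting time $W_{i-1,2}$, and then to recognize the (conditional) law of $W_{i-1,2}$ as the steady-state $M/D/1$ waiting-time law $p_W$ of Corollary~\ref{th:md1waitpdf}. I would carry this out in three steps: substitute the identities already available, extract the constant shift by a change of variables, and replace the conditional law of $W_{i-1,2}$ by its marginal using the tandem-independence result invoked in Sec.~\ref{sec:system_model}.

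First, substituting $T_{i-1,2}=W_{i-1,2}+D$ (the second node has the constant service time $D$) and $Y_{i,2}=S_{i,1}+[-\Omega_{i,1}]^+$ (derived above) into $\Omega_{i,2}=T_{i-1,2}-Y_{i,2}$ gives $\Omega_{i,2}=W_{i-1,2}+D-S_{i,1}-[-\Omega_{i,1}]^+$. Conditioning on $S_{i,1}=s_{i,1}$ and $\Omega_{i,1}=\omega_{i,1}$ makes $D-s_{i,1}-[-\omega_{i,1}]^+$ a constant, so $\Omega_{i,2}$ is just $W_{i-1,2}$ shifted by that amount; a unit-Jacobian change of variables identifies $p_{\Omega_{i,2}|S_{i,1},\Omega_{i,1}}(\omega_{i,2}|s_{i,1},\omega_{i,1})$ with the conditional density of $W_{i-1,2}$ given $(S_{i,1},\Omega_{i,1})$ evaluated at $\omega_{i,2}+s_{i,1}+[-\omega_{i,1}]^+-D$ (and the atom of $W_{i-1,2}$ at $0$ of mass $1-\lambda D$ carries over, matching the mixed form of $p_W$).

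Second, and this is the crux, I would show that $W_{i-1,2}$ is independent of the pair $(S_{i,1},\Omega_{i,1})$, so the conditional density above is simply the marginal of $W_{i-1,2}$. The service time $S_{i,1}$ and the interarrival time $Y_i$ concern packet $i$ and are independent of the arrival epochs and first-node service times of packets $1,\dots,i-1$, whereas $W_{i-1,2}$ is a function of that earlier data; the only genuine coupling is between $W_{i-1,2}$ and $T_{i-1,1}$, which both depend on those earlier arrival epochs and first-node service times. Invoking Reich's result (the tandem-independence fact already used in Sec.~\ref{sec:system_model}, a consequence of Burke's theorem applied to the $M/M/1$ first node) gives $T_{i-1,1}\perp T_{i-1,2}$ in steady state, hence $T_{i-1,1}\perp W_{i-1,2}=T_{i-1,2}-D$; together with $(Y_i,S_{i,1})$ being independent of everything measurable from packets $1,\dots,i-1$, this yields $W_{i-1,2}\perp(T_{i-1,1},Y_i,S_{i,1})$ and therefore $W_{i-1,2}\perp(\Omega_{i,1},S_{i,1})$ since $\Omega_{i,1}=T_{i-1,1}-Y_i$. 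Finally, by Burke's theorem the arrival stream into the second node is Poisson of rate $\lambda$, so that node is an $M/D/1$ queue in steady state and $W_{i-1,2}$ has density $p_W$ from Corollary~\ref{th:md1waitpdf}; substituting into the expression from the previous step gives $p_W(\omega_{i,2}+s_{i,1}+[-\omega_{i,1}]^+-D)$, as claimed.

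The hard part is exactly this independence claim: since $T_{i-1,1}$ and $W_{i-1,2}$ share dependence on the first-node service times of earlier packets, their independence is invisible from the queue recursions and rests essentially on the Reich/Burke tandem-independence property. The remaining steps are a substitution, a one-line change of variables, and the identification of a standard marginal.
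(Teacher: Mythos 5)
Your proposal is correct and follows essentially the same route the paper takes implicitly (the corollary is stated without a separate proof): substitute $T_{i-1,2}=W_{i-1,2}+D$ and $Y_{i,2}=S_{i,1}+[-\Omega_{i,1}]^+$ into $\Omega_{i,2}=T_{i-1,2}-Y_{i,2}$, shift variables, and identify the law of $W_{i-1,2}$ with the steady-state $M/D/1$ waiting time $p_W$ via the Reich/Burke independence already invoked in Sec.~\ref{sec:system_model}; you merely make the independence of $W_{i-1,2}$ from $(S_{i,1},\Omega_{i,1})$ explicit where the paper leaves it tacit. Note also that your signs, $\Omega_{i,2}=W_{i-1,2}+D-S_{i,1}-[-\Omega_{i,1}]^+$, are the correct ones and agree with the corollary's final expression; the plus signs in the paper's intermediate line are a typo.
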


Fig.~\ref{fig:reception} shows this clearly: when $\Omega_{i,1}$ is negative, the extended queuing time $\Omega_{i,2}$ corresponds to $S_{i,1}+W_{i-1,2}+D-\Omega_{i,1}$, while when $\Omega_{i,1}$ is positive, packet $i$ starts service immediately after packet $i-1$ leaves the first system, and we have $\Omega_{i,2}=S_{i,1}+W_{i-1,2}+D$.
The waiting time in $M/D/1$ system is analytically derived, although it can give numeric problems for very large waiting times and high load~\cite{iversen1999waiting}. If the application requires the computation of very large waiting times with $\lambda D\simeq 1$, we suggest the use of more numerically stable methods from the relevant literature.

\begin{theorem}\label{th:mm1omega}
\edit{
 The \gls{pdf} of the extended waiting time in a $M/M/1$ - $M/M/1$ tandem queue is given by:
\begin{align}
    p_{\Omega_{i,j}|Y_{i,j}}\left(\omega_{i,j}|y_{i,j}\right)&=\alpha_j e^{-\alpha_j\left(\omega_{i,j}+y_{i,j}\right)}u(\omega_{i,j}+y_{i,j}),\label{eq:p_omega_y}
\end{align}
where $u(\cdot)$ is the step function.}
\end{theorem}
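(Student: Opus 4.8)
The plan is to reduce \eqref{eq:p_omega_y} to a single classical fact about a stationary $M/M/1$ queue: the sojourn time of a customer is $\mathrm{Exp}(\alpha_j)$ distributed, with $\alpha_j=\mu_j-\lambda$, and it is independent of the interarrival gap between that customer and the next one. Granting this, and recalling that $\Omega_{i,j}=T_{i-1,j}-Y_{i,j}$, the conditional law of $\Omega_{i,j}$ given $Y_{i,j}=y_{i,j}$ is simply that of the shifted random variable $T_{i-1,j}-y_{i,j}$, so \eqref{eq:p_omega_y} follows by a one-line change of variables: $p_{\Omega_{i,j}|Y_{i,j}}(\omega_{i,j}|y_{i,j})=p_{T_{i-1,j}}(\omega_{i,j}+y_{i,j})=\alpha_j e^{-\alpha_j(\omega_{i,j}+y_{i,j})}u(\omega_{i,j}+y_{i,j})$.

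The actual work is to justify that fact for each $j$. For $j=1$ the first node is an $M/M/1$ queue by hypothesis. For $j=2$, Burke's theorem gives that the departure stream of node~$1$ is Poisson with rate $\lambda$; since that stream is a deterministic function of the external Poisson arrivals and of node~$1$'s service times only, it is independent of node~$2$'s i.i.d.\ $\mathrm{Exp}(\mu_2)$ service times, and FCFS on a single server preserves the packet order, so node~$2$ in isolation is again a stationary $M/M/1$ queue and $Y_{i,2}>0$ is one of its interarrival gaps. In either case the stationary sojourn time $T_{i-1,j}$ of packet $i-1$ is a function only of the arrival/service history up to packet $i-1$'s arrival at node~$j$ and of service times, whereas $Y_{i,j}$ — by the i.i.d.\ exponential structure of the gaps of the node-$j$ Poisson input conditionally on the earlier arrival epochs, together with its independence of all service times — is independent of that history. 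Hence $T_{i-1,j}\perp Y_{i,j}$, and in steady state $T_{i-1,j}\sim\mathrm{Exp}(\alpha_j)$, as already invoked after Reich's note in the text; conditioning on $Y_{i,j}=y_{i,j}$ therefore leaves the law of $T_{i-1,j}$ unchanged.

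I expect the delicate point to be precisely the independence $T_{i-1,j}\perp Y_{i,j}$ for $j=2$: the identity $Y_{i,2}=Y_i+T_{i,1}-T_{i-1,1}$ couples the input of the second queue to the dynamics of the first, so it is not a priori clear that the stationary sojourn time of a packet at node~$2$ is blind to the gap before the next arrival there. Resolving it is exactly what Burke's theorem buys us: once node~$2$ is recognized as a bona fide $M/M/1$ queue whose Poisson input is independent of its own server, the memorylessness argument applies verbatim and no further computation is needed, after which Step~1's change of variables closes the proof for both $j=1$ and $j=2$.
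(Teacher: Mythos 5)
Your proof is correct and follows essentially the same route as the paper, which disposes of the theorem in one line by invoking the steady-state $\mathrm{Exp}(\alpha_j)$ sojourn time of an $M/M/1$ queue together with Burke's and Reich's results (already cited in the system-model section) and the definition $\Omega_{i,j}=T_{i-1,j}-Y_{i,j}$. You merely make explicit the independence $T_{i-1,j}\perp Y_{i,j}$ for $j=2$ that the paper leaves implicit, which is a welcome clarification rather than a departure.
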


\begin{proof}
\edit{Knowing the \gls{pdf} of the system time $T_{i-1,j}$ from well-known results on $M/M/1$ queues, the interarrival time at the first relay $Y_{i,1}$ is exponentially distributed with rate $\lambda$, while in subsequent systems it is given by $Y_{i,2}=S_{i,1}+\left[-\Omega_{i,1}\right]^+$.}
\end{proof}


\begin{corollary}\label{th:mm1uncond}
\edit{
We can combine~\eqref{eq:p_omega_y} with the definition of $Y_{i,2}$ to get:
\begin{equation}
\begin{aligned}
 p_{\Omega_{i,2}|S_{i,1},\Omega_{i,1}}(\omega_{i,2}|s_{i,1},\omega_{i,1})=\alpha_1 e^{-\alpha_1\left(\omega_{i,2}+s_{i,1}+[-\omega_{i,1}]^+\right)}u(\omega_{i,2}+s_{i,1}+[-\omega_{i,1}]^+).\label{eq:nextwait_mm1}
\end{aligned}
\end{equation}}
\end{corollary}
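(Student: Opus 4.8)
The plan is to specialize Theorem~\ref{th:mm1omega} to the downstream queue ($j=2$) and then trade the conditioning variable $Y_{i,2}$ for the pair $(S_{i,1},\Omega_{i,1})$ using the identity $Y_{i,2}=S_{i,1}+[-\Omega_{i,1}]^+$ established in Section~\ref{sec:system_model}. First I would write down \eqref{eq:p_omega_y} with $j=2$, which already gives the law of $\Omega_{i,2}$ conditioned on the second-stage interarrival time $Y_{i,2}$. Next I would observe that once $S_{i,1}=s_{i,1}$ and $\Omega_{i,1}=\omega_{i,1}$ are fixed, the interarrival time $Y_{i,2}$ is no longer random: it equals the deterministic quantity $y_{i,2}=s_{i,1}+[-\omega_{i,1}]^+$. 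Plugging this value into \eqref{eq:p_omega_y} produces exactly the right-hand side of \eqref{eq:nextwait_mm1}.

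The only point that genuinely needs an argument is that conditioning on $(S_{i,1},\Omega_{i,1})$ is equivalent, for the purpose of the distribution of $\Omega_{i,2}$, to conditioning on $Y_{i,2}$ alone; equivalently, that $\Omega_{i,2}$ and $(S_{i,1},\Omega_{i,1})$ are conditionally independent given $Y_{i,2}$. This follows because $\Omega_{i,2}=T_{i-1,2}-Y_{i,2}$, and the steady-state system time $T_{i-1,2}$ of packet $i-1$ in the second queue is, by Reich's result (Burke's theorem applied to the $M/M/1$ first stage, as already invoked in Section~\ref{sec:system_model}), independent of the first-stage system time $T_{i-1,1}$; it is moreover independent of the ``fresh'' randomness $S_{i,1}$ and $Y_{i,1}=Y_i$ attached to packet $i$ at the first queue, since those quantities concern a packet that arrives only after packet $i-1$ has already entered the second system. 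Since $\Omega_{i,1}=T_{i-1,1}-Y_{i,1}$ and $Y_{i,2}=S_{i,1}+[-\Omega_{i,1}]^+$ are measurable functions of $(T_{i-1,1},Y_i,S_{i,1})$, this gives that $T_{i-1,2}$ is independent of the triple $(S_{i,1},\Omega_{i,1},Y_{i,2})$, so the conditional density of $\Omega_{i,2}=T_{i-1,2}-Y_{i,2}$ given $(S_{i,1},\Omega_{i,1})$ is simply $p_{T_{i-1,2}}(\omega_{i,2}+y_{i,2})$, i.e. the exponential density (with step-function support) appearing in \eqref{eq:p_omega_y}.

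Substituting $y_{i,2}=s_{i,1}+[-\omega_{i,1}]^+$ back into that expression then yields \eqref{eq:nextwait_mm1} with no further computation. So the step has essentially no calculational content: its whole weight sits on the conditional-independence justification above, which is the part I would expect a careful reader to want spelled out, and which is inherited almost verbatim from Theorem~\ref{th:mm1omega} together with the cited result of Reich. I would therefore keep the proof to two or three lines: cite \eqref{eq:p_omega_y} for $j=2$, note the conditioning argument, and perform the substitution.
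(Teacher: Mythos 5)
Your proposal is correct and follows the same route the paper intends: specialize Theorem~\ref{th:mm1omega} to $j=2$ and substitute $Y_{i,2}=S_{i,1}+[-\Omega_{i,1}]^+$, with the conditional-independence justification you supply (Reich/Burke for $T_{i-1,2}\perp T_{i-1,1}$, plus the freshness of $S_{i,1}$ and $Y_i$) being exactly the content the paper leaves implicit. One caveat: carried out literally, your substitution yields rate $\alpha_2$ (since $\Omega_{i,2}=T_{i-1,2}-Y_{i,2}$ and $T_{i-1,2}\sim\mathrm{Exp}(\alpha_2)$), so it does not ``exactly'' reproduce \eqref{eq:nextwait_mm1} as printed with $\alpha_1$ --- that index appears to be a typo in the statement rather than a flaw in your argument, but you should have flagged the mismatch instead of asserting an exact match.
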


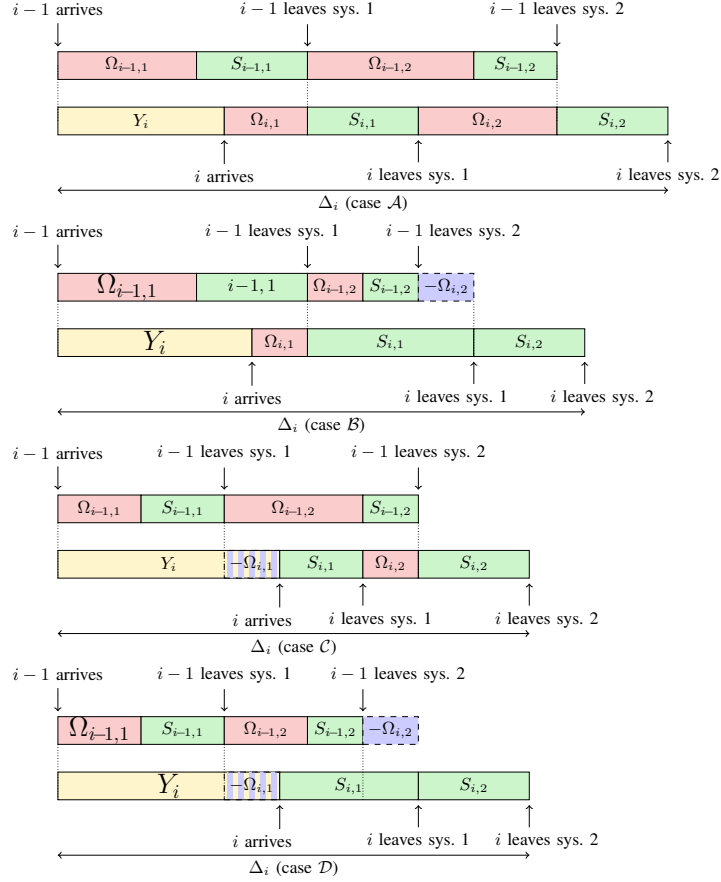
\begin{figure}[!t]
 \centering
 \resizebox{0.6\columnwidth}{!}{
 \footnotesize{
\tikzset{  
block/.style    = {draw, rectangle, minimum height = 2cm, minimum width = 1em}}
\begin{tikzpicture}[auto]

  \node[rectangle,draw,inner sep=0pt,minimum width=2.5cm,minimum height=0.5cm,fill={red!20},name=omone] at (0,0) {$\Omega_{i\!-\!1,1}$};
  \node[rectangle, right of=omone,node distance=2.25cm,draw,inner sep=0pt,minimum width=2cm,fill={green!20},minimum height=0.5cm,name=essone] {$S_{i\!-\!1,1}$};
  \node[rectangle, right of=essone,node distance=2.5cm,draw,inner sep=0pt,minimum width=3cm,fill={red!20},minimum height=0.5cm,name=omtwo] {$\Omega_{i\!-\!1,2}$};
  \node[rectangle, right of=omtwo,node distance=2.25cm,draw,inner sep=0pt,minimum width=1.5cm,fill={green!20},minimum height=0.5cm,name=esstwo] {$S_{i\!-\!1,2}$};
  \node[rectangle, draw,inner sep=0pt,minimum width=3cm,fill={yellow!20},minimum height=0.5cm,name=y] at (0.25, -1) {$Y_{i}$};
  \node[rectangle, right of=y,node distance=2.25cm,draw,inner sep=0pt,minimum width=1.5cm,fill={red!20},minimum height=0.5cm,name=omione] {$\Omega_{i,1}$};
  \node[rectangle, right of=omione,node distance=1.75cm,draw,inner sep=0pt,minimum width=2cm,fill={green!20},minimum height=0.5cm,name=essione] {$S_{i,1}$};
  \node[rectangle, right of=essione,node distance=2.25cm,draw,inner sep=0pt,minimum width=2.5cm,fill={red!20},minimum height=0.5cm,name=omitwo] {$\Omega_{i,2}$};
  \node[rectangle, right of=omitwo,node distance=2.25cm,draw,inner sep=0pt,minimum width=2cm,fill={green!20},minimum height=0.5cm,name=essitwo] {$S_{i,2}$};
  \node [name=label1] at (-1.25,1) {$i-1$ arrives};
  \node [name=label2] at (3.25,1) {$i-1$ leaves sys. 1};
  \node [name=label3] at (7.75,1) {$i-1$ leaves sys. 2};
  \node [name=label4] at (1.75,-2) {$i$ arrives};
  \node [name=label5] at (5.25,-2) {$i$ leaves sys. 1};
  \node [name=label6] at (9.75,-2) {$i$ leaves sys. 2};

  \draw[densely dotted] (3.25,0) to (3.25,-1);
  \draw[densely dotted] (7.75,0) to (7.75,-1);
  \draw[densely dotted] (-1.25,0) to (-1.25,-1);
  \draw[->] (label1.south) to (-1.25,0.35);
  \draw[->] (label2.south) to (3.25,0.35);
  \draw[->] (label3.south) to (7.75,0.35);
  \draw[->] (label4.north) to (1.75,-1.35);
  \draw[->] (label5.north) to (5.25,-1.35);
  \draw[->] (label6.north) to (9.75,-1.35);
  \draw[<->] (-1.25, -2.25) to node[midway,below] {$\Delta_i$ (case $\mathcal{A}$)} (9.75, -2.25);
  
  \node[rectangle,draw,inner sep=0pt,minimum width=2.5cm,minimum height=0.5cm,fill={red!20},name=omoneb] at (0,-4) {\large$\Omega_{i\!-\!1,1}$};
  \node[rectangle, right of=omoneb,node distance=2.25cm,draw,inner sep=0pt,minimum width=2cm,fill={green!20},minimum height=0.5cm,name=essoneb] {${i\!-\!1,1}$};
  \node[rectangle, right of=essoneb,node distance=1.5cm,draw,inner sep=0pt,minimum width=1cm,fill={red!20},minimum height=0.5cm,name=omtwob] {$\Omega_{i\!-\!1,2}$};
  \node[rectangle, right of=omtwob,node distance=1cm,draw,inner sep=0pt,minimum width=1cm,fill={green!20},minimum height=0.5cm,name=esstwob] {$S_{i\!-\!1,2}$};
  \node[rectangle, right of=esstwob,node distance=1cm,draw, dashed,inner sep=0pt,minimum width=1cm,fill={blue!20},minimum height=0.5cm,name=omitwob] {$-\Omega_{i,2}$};
  \node[rectangle, draw,inner sep=0pt,minimum width=3.5cm,fill={yellow!20},minimum height=0.5cm,name=yb] at (0.5, -5) {\large$Y_{i}$};
  \node[rectangle, right of=yb,node distance=2.25cm,draw,inner sep=0pt,minimum width=1cm,fill={red!20},minimum height=0.5cm,name=omioneb] {$\Omega_{i,1}$};
  \node[rectangle, right of=omioneb,node distance=2cm,draw,inner sep=0pt,minimum width=3cm,fill={green!20},minimum height=0.5cm,name=essioneb] {$S_{i,1}$};
  \node[rectangle, right of=essioneb,node distance=2.5cm,draw,inner sep=0pt,minimum width=2cm,fill={green!20},minimum height=0.5cm,name=essitwob] {$S_{i,2}$};
  \node [name=label7] at (-1.25,-3) {$i-1$ arrives};
  \node [name=label8] at (3.25,-3) {$i-1$ leaves sys. 1$\quad\quad\quad\quad$};
  \node [name=label9] at (5.25,-3) {$\quad\quad\quad\quad i-1$ leaves sys. 2};
  \node [name=label10] at (2.25,-6) {$i$ arrives};
  \node [name=label11] at (6.25,-6) {$i$ leaves sys. 1$\quad\quad$};
  \node [name=label12] at (8.25,-6) {$\quad\quad i$ leaves sys. 2};
  \draw[densely dotted] (3.25, -5.25) to (3.25, -4.25);
  \draw[densely dotted] (-1.25, -5.25) to (-1.25, -4.25);
  \draw[densely dotted] (6.25, -5.25) to (6.25, -4.25);
  \draw[->] (label7.south) to (-1.25,-3.65);
  \draw[->] (label8.south) to (3.25,-3.65);
  \draw[->] (label9.south) to (5.25,-3.65);
  \draw[->] (label10.north) to (2.25,-5.35);
  \draw[->] (label11.north) to (6.25,-5.35);
  \draw[->] (label12.north) to (8.25,-5.35);
  \draw[<->] (-1.25, -6.25) to node[midway,below] {$\Delta_i$ (case $\mathcal{B}$)} (8.25, -6.25);
  
  \node[rectangle,draw,inner sep=0pt,minimum width=1.5cm,minimum height=0.5cm,fill={red!20},name=omonec] at (-0.5,-8) {$\Omega_{i\!-\!1,1}$};
  \node[rectangle, right of=omonec,node distance=1.5cm,draw,inner sep=0pt,minimum width=1.5cm,fill={green!20},minimum height=0.5cm,name=essonec] {$S_{i\!-\!1,1}$};
  \node[rectangle, right of=essonec,node distance=2cm,draw,inner sep=0pt,minimum width=2.5cm,fill={red!20},minimum height=0.5cm,name=omtwoc] {$\Omega_{i\!-\!1,2}$};
  \node[rectangle, right of=omtwoc,node distance=1.75cm,draw,inner sep=0pt,minimum width=1cm,fill={green!20},minimum height=0.5cm,name=esstwoc] {$S_{i\!-\!1,2}$};
  \node[rectangle, draw,inner sep=0pt,minimum width=4cm,fill={yellow!20},minimum height=0.5cm,name=yc] at (0.75, -9) {$Y_{i}$};
  \draw[draw=none,fill={blue!20}] (1.8,-9.24) rectangle (1.9,-8.76);
  \draw[draw=none,fill={blue!20}] (2,-9.24) rectangle (2.1,-8.76);
  \draw[draw=none,fill={blue!20}] (2.2,-9.24) rectangle (2.3,-8.76);
  \draw[draw=none,fill={blue!20}] (2.4,-9.24) rectangle (2.5,-8.76);
  \draw[draw=none,fill={blue!20}] (2.6,-9.24) rectangle (2.7,-8.76);
   \node[rectangle, right of=yc,node distance=1.5cm,draw, dashed,inner sep=0pt,minimum width=1cm,minimum height=0.5cm,name=omionec] {$-\Omega_{i,1}$};
  \node[rectangle, right of=omionec,node distance=1.25cm,draw,inner sep=0pt,minimum width=1.5cm,fill={green!20},minimum height=0.5cm,name=essionec] {$S_{i,1}$};
  \node[rectangle, right of=essionec,node distance=1.25cm,draw,inner sep=0pt,minimum width=1cm,fill={red!20},minimum height=0.5cm,name=omitwoc] {$\Omega_{i,2}$};
  \node[rectangle, right of=omitwoc,node distance=1.5cm,draw,inner sep=0pt,minimum width=2cm,fill={green!20},minimum height=0.5cm,name=essitwoc] {$S_{i,2}$};
  \node [name=label7] at (-1.25,-7) {$i-1$ arrives};
  \node [name=label8] at (1.75,-7) {$i-1$ leaves sys. 1};
  \node [name=label9] at (5.25,-7) {$i-1$ leaves sys. 2};
  \node [name=label10] at (2.75,-10) {$i$ arrives$\quad\quad$};
  \node [name=label11] at (4.25,-10) {$\quad\quad i$ leaves sys. 1};
  \node [name=label12] at (7.25,-10) {$\quad\quad i$ leaves sys. 2};
  \draw[densely dotted] (1.75, -9.25) to (1.75, -8.25);
  \draw[densely dotted] (-1.25, -9.25) to (-1.25, -8.25);
  \draw[densely dotted] (5.25, -9.25) to (5.25, -8.25);
  \draw[->] (label7.south) to (-1.25,-7.65);
  \draw[->] (label8.south) to (1.75,-7.65);
  \draw[->] (label9.south) to (5.25,-7.65);
  \draw[->] (label10.north) to (2.75,-9.35);
  \draw[->] (label11.north) to (4.25,-9.35);
  \draw[->] (label12.north) to (7.25,-9.35);
  \draw[<->] (-1.25, -10.25) to node[midway,below] {$\Delta_i$ (case $\mathcal{C}$)} (7.25, -10.25);
  
    \node[rectangle,draw,inner sep=0pt,minimum width=1.5cm,minimum height=0.5cm,fill={red!20},name=omoned] at (-0.5,-12) {\large$\Omega_{i\!-\!1,1}$};
  \node[rectangle, right of=omoned,node distance=1.5cm,draw,inner sep=0pt,minimum width=1.5cm,fill={green!20},minimum height=0.5cm,name=essoned] {$S_{i\!-\!1,1}$};
  \node[rectangle, right of=essoned,node distance=1.5cm,draw,inner sep=0pt,minimum width=1.5cm,fill={red!20},minimum height=0.5cm,name=omtwod] {$\Omega_{i\!-\!1,2}$};
  \node[rectangle, right of=omtwod,node distance=1.25cm,draw,inner sep=0pt,minimum width=1cm,fill={green!20},minimum height=0.5cm,name=esstwod] {$S_{i\!-\!1,2}$};
  \node[rectangle, right of=esstwod,node distance=1cm,draw, dashed,inner sep=0pt,minimum width=1cm,fill={blue!20},minimum height=0.5cm,name=omitwod] {$-\Omega_{i,2}$};
  \node[rectangle, draw,inner sep=0pt,minimum width=4cm,fill={yellow!20},minimum height=0.5cm,name=yd] at (0.75, -13) {\large$Y_{i}$};
  \draw[draw=none,fill={blue!20}] (1.8,-13.24) rectangle (1.9,-12.76);
  \draw[draw=none,fill={blue!20}] (2,-13.24) rectangle (2.1,-12.76);
  \draw[draw=none,fill={blue!20}] (2.2,-13.24) rectangle (2.3,-12.76);
  \draw[draw=none,fill={blue!20}] (2.4,-13.24) rectangle (2.5,-12.76);
  \draw[draw=none,fill={blue!20}] (2.6,-13.24) rectangle (2.7,-12.76);
   \node[rectangle, right of=yd,node distance=1.5cm,draw, dashed,inner sep=0pt,minimum width=1cm,minimum height=0.5cm,name=omioned] {$-\Omega_{i,1}$};
  \node[rectangle, right of=omioned,node distance=1.75cm,draw,inner sep=0pt,minimum width=2.5cm,fill={green!20},minimum height=0.5cm,name=essioned] {$S_{i,1}$};
  \node[rectangle, right of=essioned,node distance=2.25cm,draw,inner sep=0pt,minimum width=2cm,fill={green!20},minimum height=0.5cm,name=essitwod] {$S_{i,2}$};
  \node [name=label7] at (-1.25,-11) {$i-1$ arrives};
  \node [name=label8] at (1.75,-11) {$i-1$ leaves sys. 1};
  \node [name=label9] at (4.25,-11) {$\quad\quad\quad\quad i-1$ leaves sys. 2};
  \node [name=label10] at (2.75,-14) {$i$ arrives$\quad\quad$};
  \node [name=label11] at (5.25,-14) {$i$ leaves sys. 1};
  \node [name=label12] at (7.25,-14) {$\quad\quad i$ leaves sys. 2};
  \draw[densely dotted] (1.75, -13.25) to (1.75, -12.25);
  \draw[densely dotted] (-1.25, -13.25) to (-1.25, -12.25);
  \draw[densely dotted] (4.25, -13.25) to (4.25, -12.25);
  \draw[->] (label7.south) to (-1.25,-11.65);
  \draw[->] (label8.south) to (1.75,-11.65);
  \draw[->] (label9.south) to (4.25,-11.65);
  \draw[->] (label10.north) to (2.75,-13.35);
  \draw[->] (label11.north) to (5.25,-13.35);
  \draw[->] (label12.north) to (7.25,-13.35);
  \draw[<->] (-1.25, -14.25) to node[midway,below] {$\Delta_i$ (case $\mathcal{D}$)} (7.25, -14.25);
  
\end{tikzpicture}
}}
\caption{Schematic of the components of the \gls{paoi}. (case $\mathcal{A}$): packet $i$ has to wait in both queues. (case $\mathcal{B}$): packet $i$ waits only in the first queue. (case $\mathcal{C}$): packet $i$ waits only in the second queue. (case $\mathcal{D}$): packet $i$ is immediately served in both systems. Observe the negative expected waiting time $\Omega_{i,j}$ in the cases with empty queue(s).\vspace{-0.3cm}}
\label{fig:reception}
\end{figure}

To compute the exact \gls{pdf} of \gls{paoi} in the 2-system case ($j \in {1, 2}$), we distinguish between free and busy systems at each node, i.e., we condition the \gls{pdf} on the state of each system when packet $i$ arrives to it and calculate it separately for the four possible combinations. Case $\mathcal{A}$ is defined as $\Omega_{i,1}>0\wedge\Omega_{i,2}>0$, while in case $\mathcal{B}$ we have $\Omega_{i,1}>0\wedge\Omega_{i,2}\leq0$. Similarly, in case $\mathcal{C}$ we have $\Omega_{i,1}\leq0\wedge\Omega_{i,2}>0$ and in case $\mathcal{D}$ we get $\Omega_{i,1}\leq0\wedge\Omega_{i,2}\leq0$. An example of the relevant values in the four cases are shown in Fig.~\ref{fig:reception}: in case $\mathcal{A}$, packet $i$ is queued in both systems, as the previous packet is still in the system when $i$ arrives in each. The two extended queuing times (shown in red) are positive. In case $\mathcal{B}$, packet $i-1$ has already left the second system when packet $i$ leaves the first: the extended queuing time (shown in blue with a dashed outline) is negative, and packet $i$ enters service in the second system as soon as it arrives. In case $\mathcal{C}$, it is the first system that is empty when packet $i$ arrives, and in case $\mathcal{D}$, both systems are empty, and the packet enters service directly at both.

\edit{We can then exploit Corollaries~\ref{th:md1waitpdf}-\ref{th:md1omega} to compute the conditioned \gls{pdf} of the total time at the second system. The \gls{paoi} can then simply be computed by unconditioning over the values of $S_{i,1}$, $\Omega_{i,1}$, and $Y_i$, simply applying the law of total probability until the \gls{pdf} for the given case is derived. The division in 4 cases is not strictly necessary, but it reduces the number of terms in the equations considerably with respect to deriving the \gls{pdf} for the general case directly. The computation for the $M/M/1$ -- $M/M/1$ tandem follows the same reasoning, using the results in Theorem~\ref{th:mm1omega} and its Corollary~\ref{th:mm1uncond}.}

\begin{definition}\label{th:totalpdf}\edit{The \gls{pdf} of the \gls{paoi} is:
\begin{equation}
\begin{aligned}
  p_{\Delta}(\tau)=p_{\Delta|\mathcal{A}}\left(\tau\right)p(\mathcal{A})+p_{\Delta|\mathcal{B}}\left(\tau\right)p(\mathcal{B})+p_{\Delta|\mathcal{C}}\left(\tau\right)p(\mathcal{C})+p_{\Delta|\mathcal{D}}\left(\tau\right)p(\mathcal{D}).
\end{aligned} \label{eq:paoi}
\end{equation}}
\end{definition}
where $p_{\Delta|\mathcal{X}}$ is the \gls{pdf} of the \gls{paoi} in case $X$ and $p(\mathcal{X})$ is the probability of case $X$ happening. The definition comes from the application of the total law of probability.

In case $\mathcal{A}$, packet $i$ is queued at both systems, and the packet will have the highest queuing delay and system time. The case with the lowest system time is case $\mathcal{D}$, in which the packet experiences no queuing. However, these intuitive relations do not necessarily hold for the \gls{paoi}, as the interarrival time between update packets can play a major role. In the analysis of the four cases, we will omit the packet index $i$ wherever possible for the sake of readability. For a quick overview of the notation used in the rest of this paper, we refer the reader to Table \ref{tab:notation}.

\begin{table}[t]\centering
    \footnotesize
	\caption{Main notation used in the paper.}
	\begin{tabular}[c]{cl}
		\toprule
		Symbol & Description \\
		\midrule
        $\lambda$ & Packet generation rate\\
        $\mu_j$ & Service rate of $M/M/1$ system $j$\\
        $D$ & Service time of the $M/D/1$ system\\
        $\alpha_j=\mu_j-\lambda$ & Response rate of system $j$\\
        $S_{i,j}$ & Service time in system $j$ for packet $i$\\
        $Y_{i,j}$ & Interarrival time in system $j$ for packet $i$\\
        $\Omega_{i,j}=T_{i-1,j}-Y_{i,j}$ & Extended waiting time in $j$ for packet $i$\\
        $W_{i,j}=[\Omega_{i,j}]^+$ & Waiting time in system $j$ for packet $i$\\
        $T_{i,j}=S_{i,j}+W_{i,j}$ & Total time in system $j$ for packet $i$\\
        $\Delta_i=Y_i+T_i$ & \gls{paoi} for packet $i$\\
		\bottomrule
	\end{tabular}
	\label{tab:notation}
\end{table}

\section{PAoI distribution for the $M/M/1$ -- $M/D/1$ tandem}\label{sec:md1}

First, we analyze the tandem of an $M/M/1$ and an $M/D/1$ system, using the \gls{pdf} of the waiting time from \eqref{eq:md1waitpdf}.
\begin{definition}\label{th:theta}\edit{
The auxiliary function $\theta(M,\beta)$, which we will use in the later derivations to make the notation more compact, is defined as:
\begin{equation}
\begin{aligned}
\theta(M,\beta)=&\int_0^M p_W(w)e^{\beta w}dw \,\forall\beta\neq0,\beta\neq-\lambda\\
=&\sum_{k=1}^{\left\lfloor\frac{M}{D}\right\rfloor}(1-\lambda D) \int\displaylimits_{kD}^M\frac{(-\lambda)^k(w-kD))^{k-1} e^{\lambda(w-kD)+\beta w}}{k!}(k+\lambda(w-kD)) dw\\
&+(1-\lambda D)\int\displaylimits_0^M\lambda e^{(\lambda+\beta) w}dw\\
=&(1-\lambda D)\Bigg(\frac{\lambda(e^{(\lambda+\beta) M}-1)}{\lambda+\beta}+\sum_{k=1}^{\left\lfloor\frac{M}{D}\right\rfloor}\Bigg[e^{\beta k D}\bigg(\frac{\beta\lambda^k}{(\lambda+\beta)^{k+1}}-e^{(\lambda+\beta)(M-kD)}\\
&\times\bigg(\frac{(-\lambda)^{k+1}(M-kD)^k}{(\lambda+\beta)k!}+\sum_{j=0}^{k-1}\frac{\lambda^k\beta(kD-M)^j}{(\lambda+\beta)^{k-j+1}j!} \bigg)\bigg)\Bigg]\Bigg).\\
\end{aligned} \label{eq:auxiliary}
\end{equation}
If $\beta=0$, the result of the integral is simply the waiting time \gls{cdf}, i.e., $\theta(M,0)=P_W(M)$.
If $\beta=-\lambda$, we have:
\begin{equation}
\begin{aligned}
\theta(M,-\lambda)=&\int_0^M p_W(w)e^{\beta w}dw\\
=&(1-\lambda D)\left(\lambda M+\sum_{k=1}^{\left\lfloor\frac{M}{D}\right\rfloor} \int\displaylimits_{kD}^M\frac{(-\lambda)^k(w-kD))^{k-1} e^{-\lambda kD}}{k!}(k+\lambda(w-kD)) dw\right)\\
=&(1-\lambda D)\left(\lambda M+e^{\beta kD}(-\lambda)^k\left(\frac{(M-kD)^k}{k!}+\frac{\lambda(M-kD)^{k+1}}{(k+1)!}\right)\right).
\end{aligned}
\end{equation}}
\end{definition}

If we consider a simple $M/D/1$ queue (i.e., not in tandem), it is easy to derive the distribution of the \gls{paoi}, as we have
\begin{equation}
  \Delta_i=D+\max(Y_i,T_{i-1}).
\end{equation}
Therefore, the \gls{paoi} is lower than $\tau$ when both $Y_i$ and $T_{i-1}$ are smaller than $\tau-D$, and we can write the \gls{cdf} as: 
\begin{equation}
  \begin{aligned}
    P_{\Delta}(\tau)=\int_0^{\tau-D}P_W(\tau-2D)\lambda e^{-\lambda y} dy  =(1-e^{\lambda(T-\tau)})P_W(\tau-2D)u(\tau-2D).
  \end{aligned}
\end{equation}
Things are more complex in the tandem system, in which an $M/M/1$ queue feeds the $M/D/1$ queue. Thanks to Burke's theorem~\cite{burke1956output}, we can consider both systems to be in steady state for packet $i-1$, distinguishing the same four cases $\mathcal{A}$-$\mathcal{D}$ described in Section~\ref{sec:system_model} and Figure~\ref{fig:reception}. 

\subsection{The packet is queued at both systems}

We start by considering the conditional \gls{cdf} of the \gls{paoi} in case $\mathcal{A}$, in which the $i$-th packet is queued at both systems (i.e., $\Omega_{i,1}>0\wedge\Omega_{i,2}>0$). The probability of a packet being in case $\mathcal{A}$ is given by two concurrent events: first, packet $i$ must find the first system busy, which is equivalent to stating that $T_{i-1,1}>Y_i$. Then, the second system must also be busy when the packet arrives to it, so we have $T_{i-1,1}+W_{i-1,2}+S_{i-1,2}>Y_i+S_{i,1}$:
\begin{equation}
\begin{aligned}
 p(\mathcal{A})&=p(\Omega_1>0)p(\Omega_2>0|\Omega_1>0)\\
  &=p(Y_i<T_{i-1,1},S_{i,1}< T_{i-1,1}-Y_i+W_{i-1,2}+D)\\
  &=\int\displaylimits_0^\infty \int\displaylimits_0^{t_1} p_{y_1}(y_1) p_{T_1}(t_1) dy_1 dt_1\int\displaylimits_0^\infty P_{S_1}(w+D) p_W(w) dw \\
 &=\rho_1\left(1-(1-\lambda D)e^{-\mu_1 D}-e^{-\mu_1 D}\lim_{M\rightarrow\infty}\theta(M,-\mu_1)\right) \\
 &=\rho_1 \left(1-(1-\lambda D)e^{-\mu_1 D}\left(1+\frac{\lambda(e^{\mu_1 D}-1)}{\alpha_1 e^{\mu_1 D}+\lambda)}\right)\right).
 \end{aligned}
\end{equation}
The conditioned distribution of the \gls{paoi} in case $\mathcal{A}$ is:
\begin{equation}
\begin{aligned}
    p_{\Delta_i|T_{i-1,1},\mathcal{A}}(\tau|t_1)=&\frac{p_W(\tau-t_1-2D)P_{Y_i+S_i}(\tau-D)P_{Y_i}(t_1)}{p(\mathcal{A})}
    \\
    =&\frac{p_W(\tau-t_1-2D) (1-e^{-\lambda t_1})(1-e^{-\mu_1(\tau-t_1-D)}) u(\tau-t_1-2D)}{p(\mathcal{A})}.
\end{aligned}
\end{equation}

We need to consider the case for $w=0$ separately, as there is a discontinuity in the \gls{cdf}. We can now uncondition the distribution by substituting $p_W(w)$ and applying the law of total probability:
\begin{equation}
\begin{aligned}
  p_{\Delta|\mathcal{A}}(\tau)=&\int\displaylimits_0^{\mathclap{\tau-2D}}p_{\Delta_i|T_{i-1,1},\mathcal{A}}(\tau|t_1)\alpha_1 e^{-\alpha_1t_1} dt_1+ \alpha_1(1-\lambda D)e^{-\alpha_1(\tau-2D)}(1-e^{-\lambda(\tau-2D)}-e^{-\mu_1D}) \\
  =&\int\displaylimits_0^{\mathclap{\tau-2D}}\frac{\alpha_1}{p(\mathcal{A})}\left(e^{\alpha_1(w-\tau+2D)}-e^{\mu_1(w-\tau+2D)}-e^{-\mu_1 D-\alpha_1(\tau-2D)-\lambda w}+e^{-\mu_1(\tau-D)}\right)p_W(w)dw\\
  &+ \frac{\alpha_1(1-\lambda D)}{p(\mathcal{A})}(1-e^{-\lambda t})(1-e^{-\mu_1(\tau-t-D)})e^{-\alpha_1(\tau-2D)}.\label{eq:casea_md1}
  \end{aligned}
\end{equation}

We can then substitute the auxiliary function defined in Lemma~\ref{th:theta} in~\eqref{eq:casea_md1}, getting the \gls{pdf} of the \gls{paoi} in case $\mathcal{A}$:
\begin{equation}
   \begin{aligned}
  p_{\Delta|\mathcal{A}}(\tau)=&\frac{\alpha_1}{p(\mathcal{A})}\big(e^{-\alpha_1(\tau+2D)}\theta(\tau-2D,\alpha_1)-e^{-\mu_1(\tau-2D)}\theta(\tau-2D,\mu_1)\\
  &-e^{-\mu_1 D-\alpha_1(\tau-2D)}\theta(\tau-2D,-\lambda)+e^{-\mu_1(\tau-D)}(P_W(\tau-2D)-(1-\lambda D)\big).\label{eq:casea_md1_complete}
  \end{aligned} 
\end{equation}

\subsection{The packet is only queued at the first system}

We now consider case $\mathcal{B}$, in which there is queuing at the first system, but not at the second. This is equivalent to stating that $T_{i-1,1}>Y_{i,1}\wedge\Omega_{i-1,2}+D\leq S_{i,1}$. Consequently, we get the following probability for case $\mathcal{B}$:
\begin{equation}
\begin{aligned}
 p(\mathcal{B})&=p(\Omega_1>0)p(\Omega_2\leq0|\Omega_1>0)\\
 &=p(Y_i<T_{i-1,1},S_{i,1}\geq T_{i-1,1}-Y_i+W_{i-1,2}+D)\\
 &=\int\displaylimits_0^\infty \int\displaylimits_0^{t_1} p_{y_1}(y_1) p_{T_1}(t_1) dy_1 dt_1\int\displaylimits_0^\infty (1-P_{S_1}(w+D)) p_W(w) dw \\
 &=\rho_1e^{-\mu_1 D}\left((1-\lambda D)+\lim_{M\rightarrow\infty}\theta(M,-\mu_1)\right)\\
 &=(1-\lambda D)\rho_1e^{-\mu_1 D}\left(1+\frac{\lambda(e^{\mu_1 D}-1)}{\alpha_1 e^{\mu_1 D}+\lambda}\right).
 \end{aligned}
\end{equation}
The \gls{paoi} in this case is given by $T_{i-1,1}+S_{i,1}+D$. Its conditional \gls{pdf} in this case is given by:
\begin{equation}
\begin{aligned}
    p_{\Delta|T_{i-1,1},\mathcal{B}}(\tau|t_1)=&\frac{ p_{S_1}(\tau-t_1-D)P_{Y_1}(t_1)P_W(\tau-t_1-2D)}{p(\mathcal{B})}\\
    =&\frac{\mu_1 e^{-\mu_1(\tau-t_1-D)}P_W(\tau-t_1-2D)(1-e^{-\lambda t_1})}{p(\mathcal{B})}.
\end{aligned}
\end{equation}
We can now uncondition this probability to obtain the \gls{pdf} of the \gls{paoi} in case $\mathcal{B}$:
\begin{equation}
\begin{aligned}
p_{\Delta|\mathcal{B}}(\tau)=&\int_D^{\tau-D}\frac{P_W(s_1-D)}{p(\mathcal{B})}\alpha_1 e^{-\alpha_1(\tau-s_1-D)}\mu_1 e^{-\mu_1 s_1}(1-e^{-\lambda(\tau-s_1-D)}) dx\\
=&\alpha_1\frac{\mu_1}{p(\mathcal{B})} e^{-\alpha_1(\tau-D)-\lambda D}\left(\int_0^{\tau-2D}P_W(w)e^{-\lambda w}dw -\int_0^{\tau-2D}P_W(w) dw\right)\\
=&\frac{\alpha_1(1-\lambda D)}{\rho_1 p(\mathcal{B})}e^{-\alpha_1(\tau-D)-\lambda D}\sum_{k=0}^{\left\lfloor\frac{\tau-2D}{D}\right\rfloor}\left(e^{-\lambda(\tau-2D)}-\sum_{j=0}^{k+1}\frac{(-\lambda)^j(\tau-(k+2)D)^j e^{-\lambda kD}}{j!}\right).
\end{aligned}
\end{equation}

\subsection{The packet is only queued at the second system}

We then consider case $\mathcal{C}$, in which there is no queuing at the first system, but the packet is queued at the second. This is equivalent to stating that $T_{i-1,1}\leq Y_{i,1}\wedge\Omega_{i-1,2}+D>S_{i,1}-\Omega_{i,1}$. Consequently, we get the following probability for case $\mathcal{C}$:
\begin{equation}
\begin{aligned}
 p(\mathcal{C})&=p(\Omega_1\leq0,\Omega_2>0)\\
   &=p(Y_i\geq T_{i-1,1},S_{i,1}<T_{i-1,1}-Y_i+W_{i-1,2}+D)\\
 &=\int\displaylimits_0^\infty \int\displaylimits_0^{y_1}\int\displaylimits_{\max(0,y_1-t_1-D)}^\infty P_{S_1}(w+t_1+D-y_1) p_W(w) p_{y_1}(y_1) p_{T_1}(t_1) dwdt_1 dy_1 \\
  &=\lambda D - p(\mathcal{A}).
 \end{aligned}
\end{equation}
The \gls{paoi} in this case is given by $T_{i-1,1}+\Omega_{i-1,2}+2D$. Since we know that we are in case $\mathcal{C}$, we have:
\begin{equation}
\begin{aligned}
    p_{\Delta|Y_{i,1},\Omega_{i-1,2},\mathcal{C}}(\tau|y_1,w)=&\frac{P_{S_1}(\tau-y_1-D)P_{Y_1}(y_1)p_T(\tau-2D-w)u(y_1+w+2D-\tau)}{p(\mathcal{C})}\\
    =&\frac{\alpha_1(1-e^{-\mu_1(\tau-y_1-D)})e^{\alpha_1(w-\tau+2D)}u(y_1+w+2D-\tau)}{p(\mathcal{C})}.
\end{aligned}
\end{equation}
We can now uncondition this probability on $Y_{i,1}$ by applying the law of total probability:
\begin{equation}
\begin{aligned}
    p_{\Delta|\Omega_{i-1,2},\mathcal{C}}(\tau|w)=&\int\displaylimits_{\mathclap{\tau-2D-w}}^{\tau-D}\frac{(1- e^{-\mu_1(\tau-y_1-D)})\lambda e^{-\lambda y_1}\alpha_1e^{\alpha_1(w-\tau+2D)}}{p(\mathcal{C})}dy_1\\
    =&\frac{e^{\mu_1(D-\tau)}(\alpha_1e^{\mu_1(w+D)}-\mu_1 e^{\alpha_1(w+D)}+\lambda)}{p(\mathcal{C})}.
    \end{aligned}
\end{equation}
We can now uncondition again on $T_{i-1,i}$ and get the \gls{pdf} of the \gls{paoi} in case $\mathcal{D}$:
\begin{equation}
\begin{aligned}
p_{\Delta|\mathcal{C}}(\tau)=&\int\displaylimits_0^{\tau-2D}p_{\Delta|\Omega_{i-1,2},\mathcal{C}}(\tau|w)p_W(w) dw+\frac{1-\lambda D}{p(\mathcal{C})}p_{\Delta|\Omega_{i-1,2},\mathcal{C}}(\tau|0)\\
=&\frac{e^{-\mu_1(\tau-D)}}{p(\mathcal{C})}\left(\alpha_1e^{\mu_1D}\theta(\tau-2D,\mu_1)-\mu_1e^{\alpha_1D}\theta(\tau-2D,\alpha_1)+\lambda P_W(\tau-2D)\right)\\
&+\frac{(1-\lambda D)e^{-\mu_1(\tau-2D)}}{p(\mathcal{C})}\left(\alpha_1-\mu_1e^{-\lambda D}+\lambda e^{-\mu_1 D}\right).
\end{aligned}
\end{equation}

\subsection{The packet is not queued at either system}

Finally, we consider case $\mathcal{D}$, in which there is no queuing at either system. This is equivalent to stating that $T_{i-1,1}\leq Y_{i,1}\wedge\Omega_{i-1,2}+D\leq S_{i,1}-\Omega_{i,1}$. Consequently, we get the following probability for case $\mathcal{D}$:
\begin{equation}
\begin{aligned}
 p(\mathcal{D})&=p(\Omega_1\leq0,\Omega_2\leq0)\\
   &=p(Y_i\geq T_{i-1,1},S_{i,1}\geq T_{i-1,1}-Y_i+W_{i-1,2}+D)\\
 &=\int\displaylimits_0^\infty \int\displaylimits_0^{y_1}\int\displaylimits_{\max(y_1-t_1-D,0)}^\infty (1-P_{S_1}(w+t_1+D-y_1)) p_W(w) p_{y_1}(y_1) p_{T_1}(t_1) dwdt_1 dy_1 \\
  &=(1-\lambda D)-p(\mathcal{B}).
 \end{aligned}
\end{equation}
The \gls{paoi} in this case is given by $Y_{i,1}+S_{i,1}+D$. Since we know that we are in case $\mathcal{D}$, we can apply Bayes' theorem to get:
\begin{equation}
\begin{aligned}
    p_{\Delta|\mathcal{D},Y_{i,1},T_{i-1,1}}(\tau|y_1,t_1)=&\frac{p_{S_1}(\tau-y_1-D)P_{Y_1}(y_1)P_W(\tau-t_1-2D)u(y_1-t_1)}{p(\mathcal{D})}\\
    =&\frac{\mu_1 e^{-\mu_1(\tau-y_1-D)}P_W(\tau-t_1-2D)u(y_1-t_1)}{p(\mathcal{D})}.
\end{aligned}
\end{equation}
We can now uncondition this probability on $Y_{i,1}$ by applying the law of total probability:
\begin{equation}
\begin{aligned}
    p_{\Delta|\mathcal{D},T_{i-1,1}}(\tau|t_1)=&\int_t^{\tau-D}\frac{\mu_1 e^{-\mu_1(\tau-y_1-D)}P_W(\tau-t_1-2D)u(y_1-t_1)\lambda e^{-\lambda y_1}}{p(\mathcal{D})}dy_1\\
    =&\frac{\lambda\mu_1}{\alpha_1 p(\mathcal{D})}P_W(\tau-t_1-2D)e^{-\mu_1(\tau-D)}(e^{\alpha_1(\tau-D)}-e^{\alpha_1 t_1}).
    \end{aligned}
\end{equation}
We can now uncondition again on $T_{i-1,i}$ and get the \gls{pdf} of the \gls{paoi} in case $\mathcal{D}$:
\begin{equation}
\begin{aligned}
p_{\Delta|\mathcal{D}}(\tau)=&\int_D^{\tau-2D}\frac{\lambda\mu_1}{ p(\mathcal{D})}P_W(\tau-t_1-2D)e^{-\mu_1(\tau-D)}(e^{\alpha_1(\tau-D-t_1)}-1) dt_1\\
=&\frac{\lambda\mu_1}{ p(\mathcal{D})} e^{-\mu_1(\tau-D)}\left(e^{\alpha_1 D}\int_0^{\tau-2D}P_W(w)e^{\alpha_1 w}dw -\int_0^{\tau-2D}P_W(w) dw\right)\\
=&\frac{\mu_1\lambda(1-\lambda D)}{p(\mathcal{D})}e^{-\mu_1(\tau-D)}\sum_{k=0}^{\left\lfloor\frac{\tau-2D}{D}\right\rfloor}\Bigg[\frac{1}{\lambda}-e^{\alpha_1(k+1)D}+\sum_{j=0}^k\Bigg(\frac{(\tau-(k+2)D)^j }{j!}\\
&\times\left((-\lambda)^{j-1}e^{\lambda(\tau-(k+2)D)}-\frac{\lambda^k(-1)^j e^{\mu_1(\tau-(k+2)D)}}{\mu_1^{k-j+1}}\right)\Bigg)\Bigg].
\end{aligned} 
\end{equation}
The overall \gls{paoi} \gls{pdf} is then given by using the previous results for the four cases in \mbox{Definition \ref{th:totalpdf}}.

\section{PAoI distribution for the $M/M/1$ -- $M/M/1$ tandem}\label{sec:mm1}

We now consider the $M/M/1$ -- $M/M/1$ tandem, which represents edge computing-enabled systems with stochastic computation times or communication relaying systems. To calculate (\ref{eq:paoi}), we divide the computation in 4 cases, as we did in the previous section.

\subsection{The packet is queued at both systems}
We first consider case $\mathcal{A}$, in which packet $i$ finds both systems busy, i.e., the $i$-th packet arrives before the departure of the $(i-1)$-th packet at each system. In this case, $\Omega_{i,1}>0\wedge\Omega_{i,2}>0$. As the conditioned \gls{pdf} of $\Omega_{i,j}$ was given in~\eqref{eq:nextwait_mm1}, and we know that $Y_{i,1}$ is independent from $T_{i-1,1}$, as is $S_{i,1}$ from $T_{i-1,2}$, the probability of this case $p(\mathcal{A})$ is given by:
\begin{equation}
\begin{aligned}
 p(\mathcal{A})&=p(\Omega_1>0)p(\Omega_2>0|\Omega_1>0)=\\
 &=\int\displaylimits_0^\infty \int\displaylimits_0^{t_1} p_{y_1}(y_1) p_{T_1}(t_1) dy_1 dt_1\int\displaylimits_0^\infty\int\displaylimits_0^{t_2} p_{S_1}(s_1) p_{T_2}(t_2) ds_1 dt_2 \\
 &=\frac{\lambda}{\mu_1+\alpha_2}.
 \end{aligned}
\end{equation}
We start from the conditioned distribution of the system time on $\Omega_1$, $\Omega_2$, and $S_1$, so $S_2$ is the only remaining random variable. In the following, the index $i$ of the packet is omitted where possible to simplify the notation:
\begin{equation}
\begin{aligned}
 p_{T|\Omega_1,\Omega_2,S_1,\mathcal{A}}\left(t|\omega_1,\omega_2,s_1\right)=\mu_2 e^{-\mu_2\left(t-\omega_1-s_1-\omega_2\right)} u(t-\omega_1-\omega_2-s_1).
 \end{aligned}
\end{equation}
We now uncondition on $\Omega_2$, and then on $S_1$, by using the law of total probability:
\begin{equation}
\begin{aligned}
 p_{T|\Omega_1,\mathcal{A}}\left(t|\omega_1\right)=&\int\displaylimits_{0}^{\mathclap{t-\omega_1}}p_{S_1}(s_1)\int\displaylimits_{0}^{\mathclap{t-s_1-\omega_1}}\frac{p_{\Omega_2|\Omega_1,S_1}(\omega_2|\omega_1,s_1)}{1-P_{\Omega_2|\Omega_1,S_1}(0|\omega_1,s_1)} p_{T|\Omega_1,\Omega_2,S_1,\mathcal{A}}  d\omega_2 ds_1\\
 =&\frac{\alpha_2\mu_2(\alpha_2+\mu_1)e^{\shortminus\alpha_2(t\shortminus\omega_1)}(\alpha_1+\lambda e^{\shortminus\mu_1(t\shortminus\omega_1)}-\mu_1e^{\shortminus\lambda(t\shortminus\omega_1)})}{\lambda\alpha_1\mu_1}.
 \end{aligned}
\end{equation}
The knowledge that we are in case $\mathcal{A}$ means that we have $\Omega_1>0$: the denominator in the first integral is the probability of this happening, which we need to account for to get the correct conditional probability. We then condition on $Y_1$ and uncondition on $\Omega_1$:
\begin{equation}
\begin{aligned}
 p_{T|Y_1,\mathcal{A}}\left(t|y_1\right)=&\int\displaylimits_{0}^{t} p_{T|\Omega_1,\mathcal{A}}\left(t|\omega_1\right) \frac{p_{\Omega_1|Y_1}(\omega_1|y_1)}{1-P_{\Omega_1|Y_1}(0|y_1)} d\omega_1\\
 =&\frac{\mu_2\alpha_2 e^{-\alpha_1 y_1}}{\lambda p(\mathcal{A})}\Bigg(\frac{\alpha_1(e^{-\alpha_1t}-e^{-\alpha_2t})}{(\mu_2-\mu_1)}+\frac{\lambda e^{-\alpha_1t}(1-e^{-\mu_2t})}{\mu_2}-\frac{\mu_1(e^{-\alpha_1t}-e^{-\mu_2t})}{\mu_2-\alpha_1}\Bigg).
\end{aligned}
\end{equation}
We can now derive the \gls{pdf} of the system time $T$:
\begin{equation}
\begin{aligned}
p_{T|\mathcal{A}}(t)=&\int_0^\infty p_{Y_1}(y_1) p_{T|Y_1,\mathcal{A}}(t|y_1) dy_1\\
=&\frac{\mu_2\alpha_2}{\mu_1 p(\mathcal{A})}\Bigg(\frac{\alpha_1(e^{-\alpha_1t}-e^{-\alpha_2t})}{(\mu_2-\mu_1)}
 +\frac{\lambda e^{-\alpha_1t}(1-e^{-\mu_2t})}{\mu_2}-\frac{\mu_1(e^{-\alpha_1t}-e^{-\mu_2t})}{\mu_2-\alpha_1}\Bigg).
\end{aligned}
\end{equation}
Finally, we get the \gls{pdf} of the \gls{paoi}, given by $T+Y_1$:
\begin{equation}
\begin{aligned}
p_{\Delta|\mathcal{A}}\left(\tau\right)=&\int_{0}^{\tau} p_{T|Y_1,\mathcal{A}}\left(t|\tau-t\right) p_{Y_1}(\tau-t) dt\\
=&\frac{\mu_1+\alpha_2}{\lambda}\bigg(\frac{\alpha_2\mu_1\mu_2(e^{-\mu_1\tau}-e^{-\mu_2\tau})}{(\mu_2-\mu_1)(\mu_2-\alpha_1)}-\lambda e^{-\mu_1\tau}(1-e^{-\alpha_2\tau})\\
&+\frac{\alpha_1\alpha_2\mu_2(e^{-\mu_1\tau}-e^{-\alpha_2\tau})}{(\mu_2-\mu_1)(\mu_1-\alpha_2)}+\frac{\alpha_1\mu_1\mu_2(e^{-\alpha_1\tau}-e^{-\mu_1\tau})}{(\mu_2-\mu_1)(\mu_2-\alpha_1)}\bigg).
\end{aligned}\label{eq:aoi_1a}
\end{equation}

\subsection{The packet is only queued at the first system}
We now consider case $\mathcal{B}$, in which the first system is busy but the second one is free when packet $i$ reaches it, i.e., the packet is not queued at the second system. We have $\Omega_{i,1}>0\wedge\Omega_{i,2}\leq0$, and this case happens with probability $p(\mathcal{B})$:
\begin{equation}
\begin{aligned}
 p(\mathcal{B})&=p(\Omega_1>0)p(\Omega_2\leq0|\Omega_1>0)\\
 &=\int\displaylimits_0^\infty \int\displaylimits_0^{t_1} p_{y_1}(y_1) p_{T_1}(t_1) dy_1 dt_1\int\displaylimits_0^\infty\int\displaylimits_{t_2}^\infty p_{S_1}(s_1) p_{T_2}(t_2) ds_1 dt_2 \\
 &=\frac{\lambda\alpha_2}{\mu_1(\mu_1+\alpha_2)}.
 \end{aligned}
\end{equation}
In this case, the system time \gls{pdf} is independent of $\Omega_2$, and we can just give the conditioned \gls{pdf} as:
\begin{equation}
\begin{aligned}
 p_{T|\Omega_1,S_1,\mathcal{B}}\left(t|\omega_1,s_1\right)=&\mu_2 e^{-\mu_2(t-\omega_1-s_1)}(1-e^{-\alpha_2s_1})u(t-\omega_1-s_1).
\end{aligned}
\end{equation}
As for case $\mathcal{A}$, we condition on $Y_1$ and uncondition on $S_1$ and $\Omega_1$:
\begin{equation}
\begin{aligned}
 p_{T|Y_1,\mathcal{B}}\left(t|y_1\right)=\frac{\mu_1(\mu_1+\alpha_2) e^{-\alpha_1(y_1+t)}}{\alpha_2}\Bigg(1-e^{-\mu_2t} -\frac{\alpha_2\mu_2(1-e^{(\alpha_1-\mu_2)t})}{(\mu_2-\mu_1)(\mu_2-\alpha_1)}+\frac{\alpha_1\mu_2(1-e^{-\lambda t})}{\lambda(\mu_2-\mu_1)}\Bigg).
\end{aligned}
\end{equation}
From this result, we derive the conditioned \gls{pdf} of the system time $T$ for case $\mathcal{B}$:
\begin{equation}
\begin{aligned}
p_{T|\mathcal{B}}(t)=\frac{\lambda e^{-\alpha_1t}}{p(\mathcal{B})}\Bigg(1-e^{-\mu_2t}+\frac{\alpha_1\mu_2(1-e^{-\lambda t})}{\lambda(\mu_2-\mu_1)}-\frac{\alpha_2\mu_2\left(1-e^{-(\mu_2-\alpha_1)t}\right)}{(\mu_2-\mu_1)(\mu_2-\alpha_1)}\Bigg).
\end{aligned}
\end{equation}
We can now find the unconditioned \gls{pdf} of the \gls{paoi} for case $\mathcal{B}$:
\begin{equation}
\begin{aligned}
 p_{\Delta|\mathcal{B}}\left(\tau\right)=&\frac{\mu_1}{p(\mathcal{B})}(e^{-\alpha_1\tau}-e^{-\mu_1\tau})-\frac{\lambda\mu_1e^{-\mu_1\tau}(1-e^{-\alpha_2\tau})}{\alpha_2p(\mathcal{B})}+\frac{\alpha_1\mu_1\mu_2(e^{-\alpha_1\tau}-(1+\lambda\tau)e^{-\mu_1\tau})}{\lambda(\mu_2-\mu_1)p(\mathcal{B})}\\
 &+\frac{\mu_1\mu_2\alpha_2\left((e^{-\mu_1\tau}-e^{-\alpha_1\tau})(\mu_2-\mu_1)+\lambda(e^{-\mu_1\tau}-e^{-\mu_2\tau})\right)}{(\mu_2-\mu_1)^2(\mu_2-\alpha_1)p(\mathcal{B})}.
\end{aligned}\label{eq:aoi_1b}
\end{equation}

\subsection{The packet is only queued at the second system}
We can then consider case $\mathcal{C}$, in which the $i$-th packet does not experience any queuing at the first system, i.e., $\Omega_1\leq0$, but there is queuing in the second system, i.e., $\Omega_2>0$. The probability of a packet experiencing case $\mathcal{C}$ is given by:
\begin{equation}
\begin{aligned}
 p(\mathcal{C})&=p(\Omega_1\leq0,\Omega_2>0)\\
 &=\int\displaylimits_0^\infty \int\displaylimits_0^{y_1}\int\displaylimits_0^\infty p_{y_1}(y_1) p_{T_1}(t_1)p_{S_1}(s_1)\int\displaylimits_{\mathclap{s_1-t_1+y_1}}^\infty p_{T_2}(t_2) dt_2 ds_1 dt_1 dy_1\\
 &=\frac{\lambda}{\mu_2(\mu_1+\alpha_2)}.
 \end{aligned}
 \end{equation}
The conditioned \gls{pdf} of the system time is:
\begin{equation}
 p_{T|\Omega_1,\Omega_2,S_1,\mathcal{C}}\left(t|\omega_1,\omega_2,s_1\right)=\mu_2 e^{-\mu_2\left(t-s_1-\omega_2\right)}u(t-s_1-\omega_2).
\end{equation}
As in case $\mathcal{A}$, we condition on $Y_1$ and uncondition on $\Omega_2$, $S_1$, and $\Omega_1$:
\begin{equation}
\begin{aligned}
 p_{T|Y_1,\mathcal{C}}\left(t|y_1\right)=\frac{\mu_2\alpha_2e^{-\alpha_2t}(e^{-\alpha_1y_1}-e^{-\alpha_2y_1})}{\lambda(\mu_2-\mu_1)p(\mathcal{C})}\left(\alpha_1-\mu_1e^{-\lambda t}+\lambda e^{-\mu_1t}\right).
\end{aligned}
\end{equation}
We can now find the \gls{pdf} of the system delay:
\begin{equation}
\begin{aligned}
 p_{T|\mathcal{C}}(t)=\frac{\alpha_2e^{-\alpha_2t}\left(\alpha_1-\mu_1e^{-\lambda t}+\lambda e^{-\mu_1t}\right)}{\mu_1 p(\mathcal{C})}.
\end{aligned}
\end{equation}
The conditioned \gls{pdf} of the \gls{paoi} is then:
\begin{equation}
\begin{aligned}
 p_{\Delta|\mathcal{C}}(\tau)=&\frac{\mu_2}{(\mu_2-\mu_1)p(\mathcal{C})}\Bigg(\frac{\alpha_1\alpha_2(e^{-\mu_1\tau}-e^{-\alpha_2\tau})}{\alpha_2-\mu_1}+\lambda e^{-\mu_1\tau}-\frac{\mu_1\alpha_2(e^{-\mu_1\tau}-e^{-\mu_2\tau})}{\mu_2-\mu_1}\\&-\frac{\alpha_1\alpha_2(e^{-\alpha_2\tau}-e^{-\mu_2\tau})}{\lambda} -\lambda e^{-(\mu_1+\alpha_2)\tau}+\alpha_2\mu_1\tau e^{-\mu_2\tau}-\frac{\lambda\alpha_2e^{-\mu_2\tau}(1-e^{-\alpha_1\tau})}{\alpha_1}\Bigg).
 \end{aligned}\label{eq:aoi_2a}
\end{equation}

\subsection{The packet is not queued at either system}
Finally, we examine case $\mathcal{D}$, in which the packet experiences no queuing, i.e., $\Omega_{i,1}\leq0 \wedge \Omega_{i,2}\leq0$. This case happens with probability $p(\mathcal{D})$:
\begin{align}
 p(\mathcal{D})=p(\Omega_1\leq0,\Omega_2\leq0)=\frac{\alpha_1\mu_2(\mu_1+\alpha_2))-\lambda\mu_1}{\mu_1\mu_2(\mu_1+\alpha_2)}.
\end{align}
Since the system time probability is independent of $\Omega_2$, we can just give the conditioned system time \gls{pdf} as:
\begin{equation}
\begin{aligned}
 p_{T|\Omega_1,S_1,\mathcal{D}}\left(t|\omega_1,s_1\right)=\frac{\mu_1\mu_2 e^{-\mu_2(t-s_1)}(1-e^{-\alpha_2(s_1-\omega_1)})}{\alpha_1} u(t-s_1).
\end{aligned}
\end{equation}
We then condition on $Y_1$ and uncondition on $S_1$ and $\Omega_1$:
\begin{equation}
\begin{aligned}
 p_{T|Y_1,\mathcal{D}}\left(t|y_1\right)=\frac{\mu_1\mu_2(e^{-\mu_1t}(1-e^{-\alpha_1y_1})-e^{-\mu_2t}(1-e^{-\alpha_2y_1})+e^{-(\mu_2+\alpha_1)t}(e^{-\alpha_1y_1}-e^{-\alpha_2y_1}))}{(\mu_2-\mu_1)p(\mathcal{D})}.
\end{aligned}
\end{equation}
The \gls{pdf} of the system time is:
\begin{equation}
\begin{aligned}
 p_{T|\mathcal{D}}(t)=\frac{\mu_2(\mu_1-\lambda)e^{-\mu_1t}-\mu_1(\mu_2-\lambda)e^{-\mu_2t}}{\lambda(\mu_2-\mu_1)p(\mathcal{D})}+\frac{\lambda e^{-(\mu_2+\alpha_1)t}}{p(\mathcal{D})}.
\end{aligned}
\end{equation}
We can now find the \gls{pdf} of the \gls{paoi} in case $\mathcal{D}$:
\begin{equation}
\begin{aligned}
 p_{\Delta|\mathcal{D}}(\tau)=\frac{\mu_1\mu_2\lambda}{p(\mathcal{D})}\Bigg(\frac{\tau(e^{-\mu_2\tau}-e^{-\mu_1\tau})}{\mu_2-\mu_1}+\frac{\alpha_1e^{-\mu_2\tau}-\alpha_2e^{-\mu_1\tau}}{\alpha_1\alpha_2(\mu_2-\mu_1)}+\frac{\left(e^{-\lambda\tau}+e^{-(\mu_1+\alpha_2)\tau}\right)}{\alpha_1\alpha_2}\Bigg).
\end{aligned}\label{eq:aoi_2b}
\end{equation}

As for the $M/M/1$ -- $M/D/1$ system, the total \gls{pdf} is given by Theorem \ref{th:totalpdf}.

\subsection{PAoI in the case with equal service rates}

In this subsection, we consider a special case in which the general formula of the \gls{paoi} \gls{pdf} is indeterminate, $\mu_1=\mu_2=\mu$. We follow the same steps as in the normal derivation. The other cases in which the general formula is indeterminate, $\mu_1=\alpha_2$ and $\mu_2=\alpha_1$, are not derived in this paper.

In case $\mathcal{A}$, i.e., for $\Omega_1>0\wedge\Omega_2>0$, we have:
\begin{equation}
p(\mathcal{A})=\frac{\lambda}{\mu+\alpha}.
\end{equation}
Following the same steps as in the general case, we get:
\begin{equation}
\begin{aligned}
p_{\Delta|\mathcal{A}}\left(\tau\right)=\frac{e^{-\mu\tau}}{p(\mathcal{A})}\left(\mu(e^{\lambda\tau}-1)+\lambda(e^{-\alpha\tau}-e^{\lambda\tau})+\frac{\mu\alpha(\alpha+\mu)(1-e^{\lambda\tau})+\mu\alpha\lambda\tau(\alpha e^{\lambda\tau}+\mu)}{\lambda^2}\right).
\end{aligned}
\end{equation}

The probability of being in case $\mathcal{B}$, i.e., $\Omega_1>0\wedge\Omega_2\leq0$, is:
\begin{equation}
    p(\mathcal{B})=\frac{\lambda\alpha}{\mu(\mu+\alpha)}.
\end{equation}
The conditioned \gls{pdf} of the \gls{paoi} is then:
\begin{equation}
\begin{aligned}
 p_{\Delta|\mathcal{B}}\left(\tau\right)=\frac{\mu e^{-\mu\tau}}{p(\mathcal{B})}\Bigg(\frac{\alpha^2(e^{\lambda\tau}-1)}{\lambda^2}-
 \frac{\lambda(1-e^{-\alpha\tau})}{\alpha}-\frac{\mu(\alpha\lambda\tau^2+(\alpha-\lambda)\tau)}{2\lambda}\Bigg).
\end{aligned}
\end{equation}
In this case, as the system is entirely symmetrical and both queues are $M/M/1$, it is also time reversible, making case $\mathcal{B}$ equivalent to case $\mathcal{C}$ in reverse.
The probability of being in case $\mathcal{C}$, i.e., $\Omega_1\leq0\wedge\Omega_2>0$, and the conditional \gls{pdf} of the \gls{paoi} are then the same as in case $\mathcal{B}$:
\begin{equation}
\begin{aligned}
p(\mathcal{C})=&p(\mathcal{B})
 p_{\Delta|\mathcal{C}}\left(\tau\right)=&p_{\Delta|\mathcal{B}}\left(\tau\right).
\end{aligned}
\end{equation}
Finally, we look at case $\mathcal{D}$, in which both systems are free:
\begin{equation}
 p(\mathcal{D})=\frac{\alpha(\mu+\alpha)-\lambda}{\mu(\mu+\alpha)}.
\end{equation}
We then have the conditioned \gls{pdf} of the \gls{paoi}:
\begin{equation}
 p_{\Delta|\mathcal{D}}\left(\tau\right)=\frac{\mu^2\lambda e^{-\mu\tau}(2\text{cosh}(\alpha\tau)-\alpha^2\tau^2-2)}{\alpha^2 p(\mathcal{D})}.
\end{equation}
As in the general case, the overall \gls{paoi} is given by Theorem~\ref{th:totalpdf}.

\section{Simulation results}\label{sec:results}

We compared the results of our analysis with a Monte Carlo simulation, transmitting $N=10^7$ packets and computing the system delay and \gls{paoi} for each. The initial stages of each simulation were discarded, removing $N_0=1000$ packets to ensure that the system had reached a steady state. We also divided the packets in the four cases, depending on the queuing they experienced at each system. As the derivation of the \gls{paoi} distribution does not involve any approximations, the simulation results should perfectly match the theoretical curves. The Monte Carlo simulation consisted of a single episode, with all packets being transmitted one after the other. The simulation parameters are listed in Table~\ref{tab:parameters}, and are used for all plots, unless otherwise specified.

\begin{table}[t]\centering
    \footnotesize
    \edit{
	\caption{Main simulation parameter values (unless otherwise specified).}
	\begin{tabular}[c]{ccl}
		\toprule
		Parameter & Value & Description \\
		\midrule
        $\lambda$ & 0.5 & Packet generation rate\\
        $\mu_1$ & 1 & Service rate of $M/M/1$ system 1\\
        $\mu_2$ & 1.25 & Service rate of $M/M/1$ system 2\\
        $D$ & 0.8 & Service time of the $M/D/1$ system\\
        $N$ & $10^7$ & Number of simulated packets\\
        $N_0$ & 1000 & Initial transition \\
        \bottomrule
	\end{tabular}
	\label{tab:parameters}}
\end{table}

\subsection{$M/M/1$ -- $M/D/1$ tandem}

We first consider the tandem in which the first queue is $M/M/1$, while the second is $M/D/1$. We note that, in all the following figures, the simulation results match the theoretically derived curves, showing the soundness of our calculations.

Although not shown, the system time has the expected behavior: it is highest in case $\mathcal{A}$ when the packet is queued at both, and lowest in case $\mathcal{D}$, in which both systems are free. However, the \gls{paoi} shows a different trend. While system time increases monotonically with the traffic load, the \gls{paoi} is the combination of the system and interarrival time: at one extreme, when the system has very low traffic, it is dominated by the interarrival time, while at the other, it is dominated by the system time. The optimal setting to minimize \gls{paoi} is somewhere in the middle, striking a balance between the two causes of age. Furthermore, deterministic service reduces uncertainty, particularly when traffic is high and queuing is the main cause of ageing.

\begin{figure}[!t]
 \centering
 \includegraphics{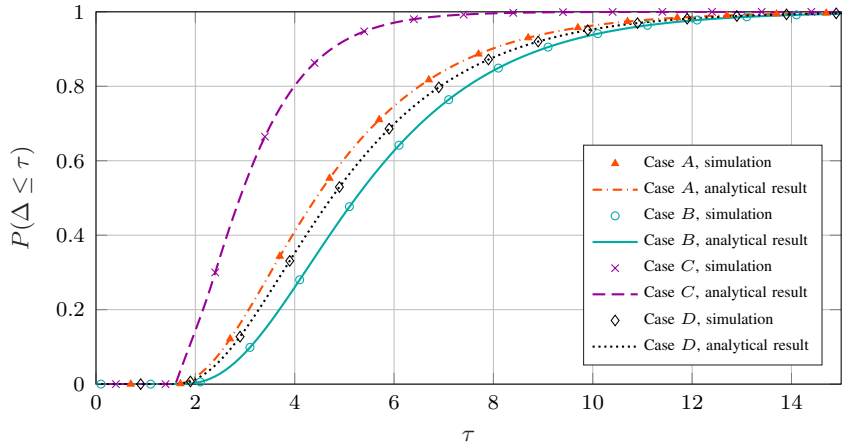}
 \vspace{-0.3cm}
 \caption{\gls{cdf} of the \gls{paoi} $\Delta$ for the $M/M/1$ -- $M/D/1$ tandem in the four subcases.}
   \vspace{-0.35cm}
 \label{fig:paoi_cases_md1}
\end{figure}

Fig.~\ref{fig:paoi_cases_md1} shows the \gls{cdf} of the \gls{paoi} in the four cases. It is interesting to note that the \gls{paoi} is never smaller than $2D$, as even serving packets instantaneously at the first system would still lead to a minimum delay: once packet $i-1$ is generated, it needs at least a time $D$ to get through the system because of the $M/D/1$ queue, and even if packet $i$ is generated right after it it needs another $D$ to be served by the edge node, leading to a minimum age of $2D$. The \gls{paoi} is far smaller in case $\mathcal{C}$, i.e., when the packet is queued only at the $M/D/1$ system, as the queue will often be short and is guaranteed to empty in a limited time. Cases $\mathcal{A}$ and $\mathcal{B}$ show a far worse performance, because of the first system's lower service rate ($D=0.8$ corresponds to a rate of 1.25) and of its exponential system time distribution. If the packet is not queued at either system (case $\mathcal{D}$), the \gls{paoi} is dominated by the interarrival time between the two packets, leading to higher ages.

\begin{figure}[t]
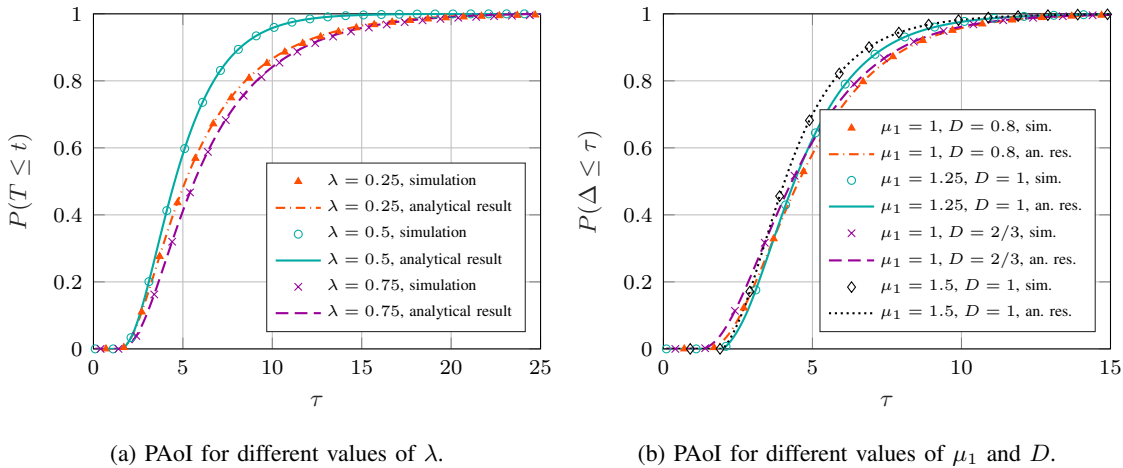

    \centering
 	\begin{subfigure}[b]{0.45\linewidth}
        \includegraphics{./aoi_lambda_md1.tex}
        \caption{\gls{paoi}  for different values of $\lambda$.}
        \label{fig:paoi_lambdas_md1}
     \end{subfigure}
     \begin{subfigure}[b]{0.45\linewidth}
        \includegraphics{./aoi_mu_md1.tex}
        \caption{\gls{paoi} for different values of $\mu_1$ and $D$.}
        \label{fig:paoi_mus_md1}
     \end{subfigure}
    \caption{\gls{cdf} of the \gls{paoi} $\Delta$ for the $M/M/1$ -- $M/M/1$ tandem for different values of $\lambda$, $\mu_1$, and $D$.}
    \label{paoi_md1}
\end{figure}

e can also examine the \gls{paoi} as a function of the generation and service rates: Fig.~\ref{fig:paoi_lambdas_md1} shows the \gls{paoi} \gls{cdf} for different values of $\lambda$. We can observe that a value of $\lambda$ close to 0.5 leads to the lowest \gls{paoi}, as a high traffic load increases queuing times, while a lower load increases \gls{paoi} due to the longer interarrival times. Fig.~\ref{fig:paoi_mus_md1} shows what happens when the service rates of the two systems are flipped.

In the figure, we compare two pairs of curves (orange and cyan, violet and black): the orange and violet dashed lines correspond to systems where the second node is faster, the cyan solid line and the black dotted line have the $M/D/1$ queue as the bottleneck. As the system time in the $M/D/1$ queues is rarely very large (it would need a very large queue to be significant, as all packets have the same service time), while $M/M/1$ queues have an exponential system time distribution which can take larger values more often, placing the bottleneck on the $M/D/1$ system leads to a lower \gls{paoi} in the worst case (i.e., the larger percentiles), at the cost of a worse \gls{paoi} in favorable scenarios. The difference between the orange and black lines is larger than between the cyan and violet ones, as a larger difference between the rates of the two links leads to an increased importance of the bottleneck. As for the subcase analysis, the system time and \gls{paoi} from the Monte Carlo simulation follow the analytical curve perfectly. \edit{As Fig.~\ref{fig:paoi_perc_mus_md1} shows, improving the edge computing capabilities has diminishing returns, as the $M/M/1$ communication system becomes the bottleneck: even with a very low $D$, the \gls{paoi} cannot be reduced beyond a certain value without also improving the first system's capacity.}

\begin{figure}[t]
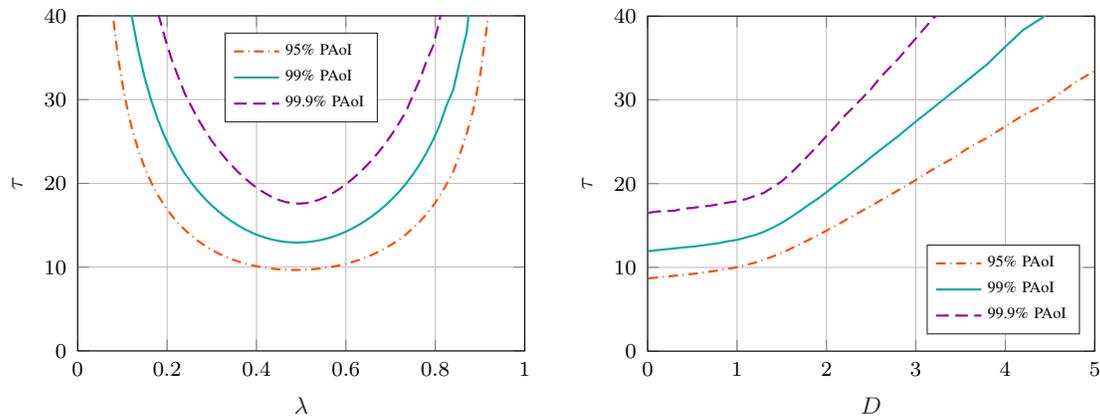

    \centering
 	\begin{subfigure}[b]{0.45\linewidth}
        \includegraphics{./aoi_percentiles_md1.tex}
        \caption{Percentiles for different values of $\lambda$ with $D=0.8$.}
        \label{fig:paoi_perc_md1}
     \end{subfigure}
     \begin{subfigure}[b]{0.45\linewidth}
        \includegraphics{./aoi_perc_mu_md1.tex}
        \caption{\edit{Percentiles for different values of $D$ with optimal $\lambda$.}}
        \label{fig:paoi_perc_mus_md1}
     \end{subfigure}
    \caption{Tail of the \gls{cdf} for the $M/M/1$ -- $M/D/1$ tandem with $\mu_1=1$.}
    \label{paoi_tail_md1}
\end{figure}

We also make a worst-case analysis as a function of $\lambda$: Fig.~\ref{fig:paoi_perc_md1} shows the 95th, 99th and 99.9th percentiles of the \gls{paoi} for a system with $\mu_1=1$ and $D=0.8$. If the traffic is very high, the queuing time is the dominant factor, causing the worst-case \gls{paoi} to diverge. The same happens if the traffic is too low, as the interarrival times can be very large: in this case, the system will almost always be empty, but updates will be very rare. The best performance in terms of \gls{paoi} is close to the middle. Depending on the desired reliability, system designers should choose the range of $\lambda$ that fulfils the given percentile of the \gls{paoi} in the specific conditions they consider.

\subsection{$M/M/1$ -- $M/M/1$ tandem}

\begin{figure}[!t]
 \centering
 \includegraphics{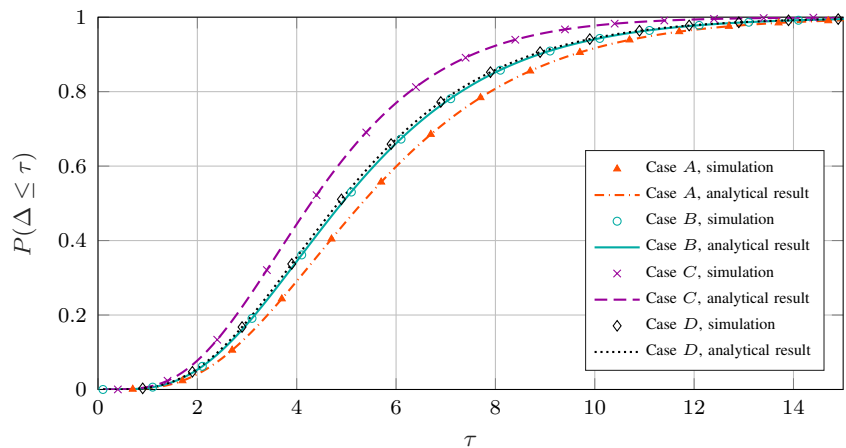}
 \vspace{-0.3cm}
 \caption{\gls{cdf} of the \gls{paoi} $\Delta$ for the $M/M/1$ -- $M/M/1$ tandem in the four subcases.}
   \vspace{-0.35cm}
 \label{fig:paoi_cases}
\end{figure}

\begin{figure}[t]
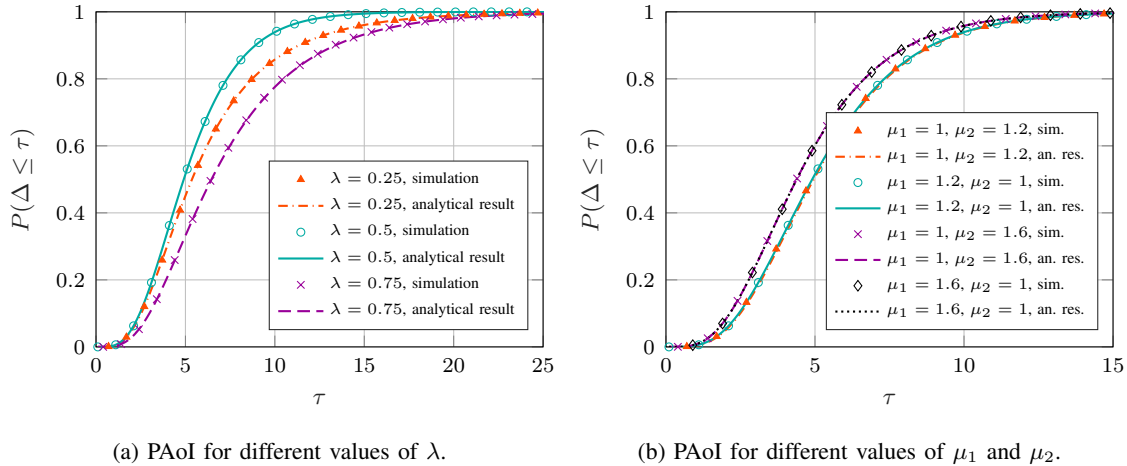
\
    \centering
 	\begin{subfigure}[b]{0.45\linewidth}
        \includegraphics{./aoi_general.tex}
        \caption{\gls{paoi}  for different values of $\lambda$.}
        \label{fig:paoi_lambdas}
     \end{subfigure}
     \begin{subfigure}[b]{0.45\linewidth}
        \includegraphics{./aoi_mu.tex}
        \caption{\gls{paoi} for different values of $\mu_1$ and $\mu_2$.}
        \label{fig:paoi_mus}
     \end{subfigure}
    \caption{\gls{cdf} of the \gls{paoi} $\Delta$ for the $M/M/1$ -- $M/M/1$ tandem for different values of $\lambda$, $\mu_1$, and $\mu_2$.}
    \label{paoi_mm1}
\end{figure}

Fig.~\ref{fig:paoi_cases} shows the \gls{paoi} \gls{cdf} in the four subcases for $\lambda=0.5$, $\mu_1=1$, and $\mu_2=1.2$. Interestingly, there is no minimum delay and the \gls{cdf} starts in zero, unlike in the $M/M/1$--$M/D/1$ case. The \gls{paoi}
is the lowest in case $\mathcal{C}$, and almost identical in cases $B$ and $D$. This is due to the effect of the interarrival times on the \gls{paoi}, as case $\mathcal{D}$ usually means that the instantaneous load of the system is low and packets are far apart, increasing the \gls{paoi}. In case $\mathcal{C}$, the faster system is busy and the bottleneck is empty. Intuitively, this can reduce age, as the second system will probably be able to serve packets fast enough, but at the same time the instantaneous load will be high enough to avoid having a strong impact on the age.

We can now examine the \gls{paoi} \glspl{cdf} for different values of $\lambda$: the system time is always higher for higher values of $\lambda$, as it depends on the traffic. The same is not true for the \gls{paoi}, as Fig.~\ref{fig:paoi_lambdas} shows: as for the $M/M/1$ -- $M/D/1$ tandem, the \gls{paoi} is lowest for $\lambda=0.5$, as the high interarrival time becomes the dominant factor for $\lambda=0.25$. Interestingly, the gap between the system with $\lambda=0.25$ and the one with $\lambda=0.75$ is wider: the $M/D/1$ system is better able to handle high load situations, as having a high system time is far rarer than in the $M/M/1$. On the other hand, the values of $\mu_1$ and $\mu_2$ also have an important effect, as Fig.~\ref{fig:paoi_mus} shows: while the bottleneck always has a service rate 1, changing the service rate of the other link and even switching the two can have an impact on the \gls{paoi}. Naturally, increasing the rate of the other link from 1.2 to 1.6 slightly reduces the \gls{paoi}, but we note that for both values, having the first or second system as the bottleneck has a negligible effect on performance. Unlike in the $M/M/1$ -- $M/D/1$ tandem, the location of the bottleneck in the tandem seems to have a very small influence on the distribution of the \gls{paoi}, as both queues have the same type of service time distribution.

Fig.~\ref{fig:paoi_perc} shows how the worst-case \gls{paoi}, measured using the 95th, 99th and 99.9th percentiles, changes as a function of $\lambda$. The figure shows that the trends for the two systems are similar: the $M/M/1$ -- $M/D/1$ queue can handle a high load slightly better, and its minimum \gls{paoi} is lower by 5-10\% at all the considered percentiles. In both cases, the optimal $\lambda$ is between 0.45 and 0.5 for all the three percentiles. \edit{As for the $M/M/1$ -- $M/D/1$ tandem, increasing the rate of the second system gives diminishing returns, and any increase past $\mu_2=2$ has negligible benefits if $\mu_1=1$, as Fig.~\ref{fig:paoi_perc_mus_mm1} shows. The horizontal axis in the figure shows the inverse of $\mu_2$ to provide a visual comparison with Fig.~\ref{fig:paoi_perc_md1}: the tail of the \gls{paoi} distribution clearly has higher values in the system with two $M/M/1$ queues, as we would expect due to the additional randomness in the service time of the edge computing node.}

\begin{figure}[t]
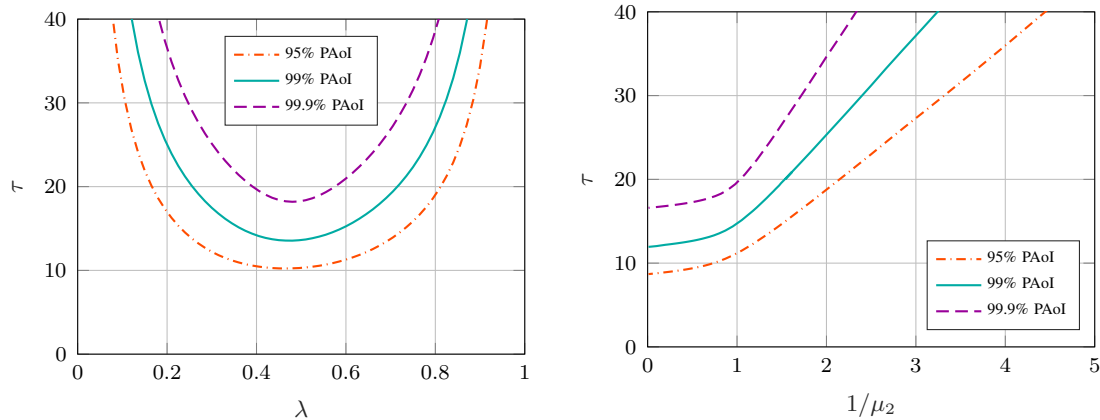

    \centering
 	\begin{subfigure}[b]{0.45\linewidth}
        \includegraphics{./aoi_percentiles.tex}
        \caption{Percentiles for different values of $\lambda$ with $\mu_2=1.25$.}
        \label{fig:paoi_perc}
     \end{subfigure}
     \begin{subfigure}[b]{0.45\linewidth}
        \includegraphics{./aoi_perc_mu_mm1.tex}
        \caption{\edit{Percentiles for different values of $\mu_2$ with optimal $\lambda$.}}
        \label{fig:paoi_perc_mus_mm1}
     \end{subfigure}
    \caption{Tail of the \gls{cdf} for the $M/M/1$ -- $M/M/1$ tandem with $\mu_1=1$.}
    \label{paoi_tail_mm1}
\end{figure}

Finally, Fig.~\ref{fig:paoi_mm1_md1} shows a comparison of the two kinds of system: as expected, the $M/M/1$ -- $M/D/1$ tandem has a higher \gls{paoi} in the best-case scenario, as it has a hard minimum of $2D$, but quickly becomes better at the higher percentiles. This difference is more pronounced if the second system is slower, with the $M/M/1$ -- $M/D/1$ tandem with $D=1$ having a better \gls{paoi} than the $M/M/1$ -- $M/M/1$ one with $\mu_2=1.25$ for percentiles above the 60th. As the plot shows, the two types of tandem are substantially equivalent if the computation is much faster than the communication. We also show the average \gls{paoi} for the two systems in Fig.~\ref{fig:avpaoi_mm1_md1}: this is computable from the \gls{pdf} we derived in this paper, but simpler formulas were derived in the literature~\cite{xu2019optimizing,kam2018age}. As expected, the $M/M/1$--$M/D/1$ system is better able to deal with high traffic load than the  $M/M/1$--$M/M/1$. It is interesting to note that the $M/M/1$--$M/D/1$ queue has a larger gain in the average \gls{paoi} than at higher percentiles, even for values of $\lambda$ between 0.45 and 0.5.

\begin{figure}[!t]
\centering
\includegraphics{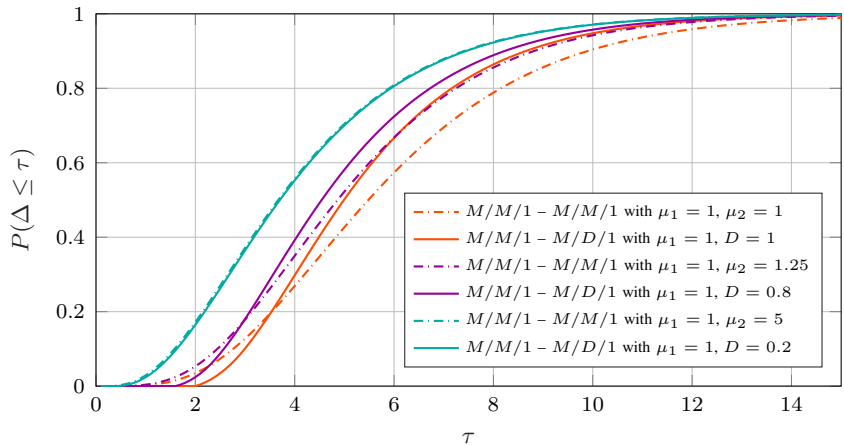}
 \vspace{-0.3cm}
 \caption{\gls{cdf} of the \gls{paoi} $\Delta$ for the two systems (analytical results) for different values of $\mu_1$ and $D$.}
  \vspace{-0.4cm}
 \label{fig:paoi_mm1_md1}
\end{figure}

\begin{figure}[!t]
\centering
\includegraphics{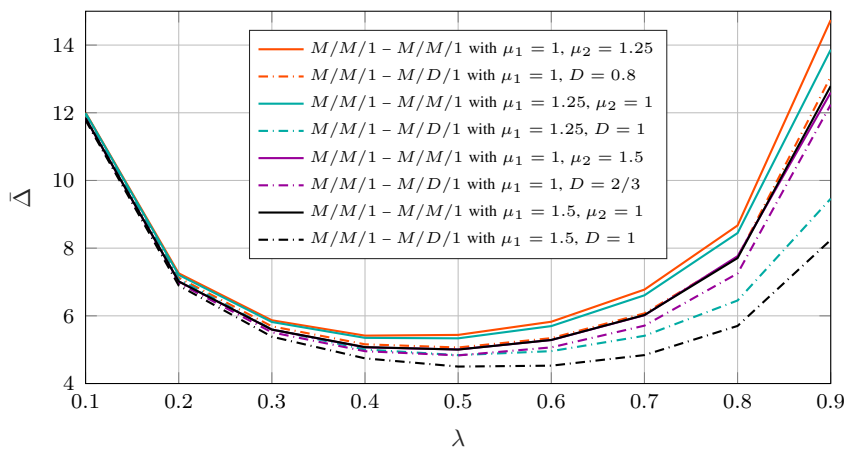}
 \vspace{-0.3cm}
 \caption{Average \gls{paoi} $\bar{\Delta}$ for the two systems (analytical results) for different values of $\mu_1$ and $D$.}
  \vspace{-0.4cm}
 \label{fig:avpaoi_mm1_md1}
\end{figure}

\section{Conclusions and future work}\label{sec:conclusions}

In this paper, we derived the \gls{pdf} of the \gls{paoi} for a tandem with an $M/M/1$ system followed by an $M/D/1$ and for one consisting of two $M/M/1$ queues. These new results can give more flexibility in the design of bounded \gls{aoi} systems, both for edge-enabled \gls{iot} where the wireless tranmission precedes the computation delay and in other relay applications. The results are derived for two nodes, but the procedure is generic for $K$ $M/M/1$ nodes, potentially followed by an $M/D/1$ system. If the $M/D/1$ system is not the last, the analytical derivation of the \gls{paoi} becomes intractable, as its departure process is non-Markovian. 

\edit{The optimization of the two systems can be a complex operation to perform in real time, and finding the complete \gls{pdf} analytically might not be manageable in more complex cases with channel errors and multiple sources. However, approximations of the optimal policies can be found by applying deep learning techniques~\cite{she2020tutorial}, paying particular attention to the uncertainty in the model parameters in order to guarantee reliability even in realistic conditions~\cite{angjelichinoski2019statistical}. The possibility to compute the \gls{pdf} and \gls{cdf} using the formulas can provide a fast way to generate samples for a deep learning system, which would then be able to generalize experience in previously unseen situations with a lower computational cost: there is an extensive body of work on similar approaches, which use learning to quickly find a solution to optimization problems with computationally expensive cost functions.} \edit{Aside from the possibilities offered by deep learning, a} possible avenue of future work is the introduction of multiple independent sources in the system, possibly with different priorities. The extension of the system to longer or more complex queuing networks is also a possibility, but the complexity of the derivation might make the analytical results unwieldy. The inclusion of error-prone links, with packets being randomly dropped, is another interesting extension. Finally, the optimization of $\lambda$ in all these scenarios might be a practical application of the theoretical derivations, yielding usable scheduling policies.

\section*{Acknowledgment}

\edit{This work has been in part supported by the Danish Council for Independent Research, Grant  8022-00284B SEMIOTIC, and by the European Union’s Horizon 2020 research and innovation program as part of the IntellIoT project, which received funding under grant agreement 957218.}


%

\ifCLASSOPTIONcaptionsoff
  \newpage
\fi



\bibliographystyle{IEEEtran}
\bibliography{bibliography.bib}
%

%

%
%
%




\end{document}